\documentclass[twocolumn,10pt,journal]{IEEEtran}
\usepackage{indentfirst,amsmath}
\usepackage{color,rotating,pdflscape}
\usepackage{cite}
\usepackage{amsfonts,amsmath,amssymb,amsthm}
\usepackage{latexsym,amscd}
\usepackage{amsbsy,extarrows}
\usepackage{graphicx,multirow,bm}
\usepackage[table]{xcolor}
\usepackage{extarrows}
\usepackage{tikz}
\usetikzlibrary{arrows,shapes}
\usetikzlibrary{shadows}

\newtheorem{Theorem}{Theorem}

\newtheorem{Remark}{Remark}

\newtheorem{Proposition}{Proposition}

\begin{document}
\title{Private and Secure Distributed  Matrix Multiplication Schemes for Replicated or MDS-Coded Servers}

\author{Jie Li, \IEEEmembership{Member,~IEEE}, and Camilla Hollanti, \IEEEmembership{Member,~IEEE}
\thanks{Partial results from this paper (mainly related to the code $\mathcal{C}_{\rm PSDMM}$ from Section \ref{sec:PSDMM}) was presented at the 2021 IEEE International Symposium on Information Theory (ISIT) \cite{li2021improved}. The work of J. Li was supported in part by the National Science Foundation of China under Grant No. 61801176. The work of C. Hollanti was supported by the Academy of Finland, under Grants No. 318937 and 336005.}
\thanks{J. Li was with the Department of Mathematics and Systems Analysis,
                    Aalto University, FI-00076 Aalto,  Finland, and also with the Hubei Key Laboratory of Applied Mathematics, Faculty of Mathematics and Statistics, Hubei University,
Wuhan 430062, China (e-mail: jieli873@gmail.com).}
\thanks{C. Hollanti is with the Department of Mathematics and Systems Analysis,
                    Aalto University, FI-00076 Aalto,  Finland (e-mail:  camilla.hollanti@aalto.fi).}
}
\date{}
\maketitle

\begin{abstract}
In this paper, we study the problem of \emph{private and secure distributed matrix multiplication (PSDMM)}, where a  user having a private matrix $A$  and $N$ non-colluding servers sharing a library of $L$ ($L>1$) matrices $B^{(0)}, B^{(1)},\ldots,B^{(L-1)}$, for which the user  wishes to compute $AB^{(\theta)}$ for some $\theta\in [0, L)$ without revealing any information of  the matrix $A$ to the servers, and keeping the index $\theta$ private to the servers. 
Previous work is limited to the case that the shared library (\textit{i.e.,} the matrices $B^{(0)}, B^{(1)},\ldots,B^{(L-1)}$) is stored across the servers in a replicated form and schemes are very scarce in the literature,   there is still much room for improvement.  
In this paper, we propose two PSDMM schemes, where one is limited to the case that the shared library is stored across the servers in a replicated form but has a better performance than  state-of-the-art schemes in that it can achieve a smaller recovery threshold and download cost. The other one focuses on the case that the shared library is stored across the servers in an MDS-coded form, which requires less storage in the servers. The second PSDMM code does not subsume the first one even if the underlying MDS code is degraded to a repetition code as they are totally two different schemes. 
\end{abstract}

\begin{IEEEkeywords}
Distributed computation, privacy, secure, distributed matrix multiplication.
\end{IEEEkeywords}

\section{Introduction}

\IEEEPARstart{M}{atrix} multiplication is  one of the key operations in many science and engineering applications, such as machine learning and cloud computing. Carrying  out the computation on  distributed servers are desirable for improving efficiency and reducing the
user's computation load, as the user can divide the computation at hand into several sub-tasks to be carried out by the helper
servers. As a downside, when scaling out computations across many distributed servers,
 the computation latency can be affected by
orders of magnitude due to the presence of stragglers, see \emph{e.g.}, \cite{joshi2017efficient,wang2015using}.
Fortunately, recent works
have shown that coding techniques can reduce
the computation latency  \cite{lee2017speeding,yu2017polynomial,dutta2018unified,dutta2019optimal,jia2021cross,yu2020straggler}.

As the  computations are scaling out across   many distributed servers, besides stragglers, security is also a concern as   the servers  might be curious about the matrix contents.
To this end,  consider the problem where the user has two matrices $A$ and $B$ and wishes to compute their product with the assistance of $N$ distributed servers, which are  honest but curious in that any $X$ of them may collude to deduce information about either $A$ or $B$ \cite{d2020gasp,d2021degree}.
This raises the problem of SDMM,
which has recently received a lot of attention from an information-theoretic perspective   \cite{chang2018capacity,d2021degree,kakar2019capacity,yang2019secure,jia2021capacity,d2020gasp,aliasgari2020private,yu2020entangled,zhu2021secure,zhu2021improved}.

In another variant of this problem,
user privacy  should also be taken into account. Consider the scenario where the  user has a private matrix $A$  and there are $N$ non-colluding servers sharing a library of $L$ matrices $B^{(0)}, B^{(1)},\ldots,B^{(L-1)}$, for which the user  wishes to compute $AB^{(\theta)}$ for some $\theta\in [0, L)$ without revealing any information of  the matrix $A$ to the servers and keeping the index $\theta$ of the desired matrix $B^{(\theta)}$  hidden from the servers. This problem is termed PSDMM, 
which is also quite relevant to the problem of private information retrieval (PIR) \cite{chor1995private,sun2017capacity,sun2017capacityrobust,freij2017private,banawan2018capacity,tajeddine2018private,freij2018t,kumar2019achieving,zhu2019new,chang2019upload,wang2019symmetric,d2019one,zhou2020capacity,jia2020x,li2020towards,holzbaur2021toward}.

Generally, four performance metrics are of particular interest for PSDMM schemes and more generally for  any matrix multiplication scheme:
\begin{itemize}
  \item The upload cost:  the amount of data transmitted from the user to the severs to assign the sub-tasks;
 \item Server storage cost: the amount of data stored in each server;
  \item The download cost: the amount of data to be downloaded from the servers;
  \item The recovery threshold $R_c$: the  number of servers that need to complete their tasks before the user can recover the desired matrix product(s).
\end{itemize}
The goal is to design PSDMM schemes that can minimize one or several of the above metrics.

Up to now, the study of PSDMM is still very scarce in the literature. In \cite{chang2019upload}, a PSDMM code  based on PIR was proposed, which can provide a flexible tradeoff between the upload cost and download cost by adjusting the partitioning sizes of the matrices, but requires a high sub-packetization degree and cannot mitigate   stragglers. In  \cite{kim2019privateCL},   PSDMM codes based on the polynomial codes in \cite{yu2017polynomial} were presented.  In \cite{aliasgari2020private}, Aliasgari \textit{et al.} proposed a new   PSDMM code  based on the entangled polynomial codes in \cite{yu2020straggler}, which generalizes the one in  \cite{kim2019privateCL} and allows for a flexible tradeoff between the upload cost and the download cost (or equivalently,  recovery threshold). Very recently,    Yu \emph{et al.} \cite{yu2020entangled}  studied the problem of PSDMM based on  Lagrange coded computing with bilinear complexity \cite{blaser2013fast}.  The detailed performance of the above codes  together with the new codes proposed in this work will be illustrated in the following sections.

In this paper, we propose two PSDMM codes $\mathcal{C}_{\rm PSDMM}$ and $\mathcal{C'}_{\rm PSDMM}$ with the shared library being stored across the servers in a replicated form and in an MDS-coded form, respectively. 
The proposed codes have the following advantages:
\begin{itemize}
  \item The new code $\mathcal{C}_{\rm PSDMM}$ outperforms  the ones in \cite{aliasgari2020private} and \cite{kim2019privateCL} in that it has a smaller recovery threshold as well as  download cost under the same upload cost, and it provides a more flexible tradeoff between the upload and download costs than the code in \cite{kim2019privateCL}.
 The new code $\mathcal{C}_{\rm PSDMM}$  also has a smaller recovery threshold and  download cost compared to the codes in \cite{yu2020entangled} and \cite{chang2019upload}, for some parameter regions. In addition,  it  can tolerate stragglers and does not require sub-packetization, hence being superior to the code in \cite{chang2019upload}.

\item The new PSDMM code $\mathcal{C'}_{\rm PSDMM}$ with the shared library being stored across the servers in an MDS-coded form, can greatly reduce the storage overhead in the servers. To the best of our knowledge, this is the first time to consider PSDMM from MDS-coded servers.
\end{itemize}
Note that the PSDMM code $\mathcal{C'}_{\rm PSDMM}$ does not subsume $\mathcal{C}_{\rm PSDMM}$ even if the underlying MDS code is degraded to a repetition code as they are totally two different schemes. More detailed comparisons and analysis will be given in the following sections.

\subsection{Organization}
The rest of this paper is organized as follows.  Section II introduces the problem settings and summarizes the main results. In Section \ref{sec:PSDMM}, a new efficient PSDMM code from replicated servers is proposed, followed by detailed comparisons with previous work. Section \ref{sec:PSDMM2} presents a PSDMM code from MDS-coded servers.  Finally, Section \ref{sec:conclusion} draws the concluding remarks.

\section{Problem settings and main results}

Throughout this paper, we assume that the matrix $A$ is a $t\times s$ matrix over a certain  finite field $\mathbf{F}$, and $B^{(i)}$, $i\in [0, L)$,  are  $s\times r$ matrices over $\mathbf{F}$, where $L> 1$, $t, s$, and $r$ are some positive integers. Let $m, p$, and $n$ be some positive integers such that  $m\mid t$, $p\mid s$, $n\mid r$.

\subsection{Problem settings}
In the following, we introduce the problem setting for PSDMM.

In PSDMM, the  user has a private matrix $A$  and there are $N$ non-colluding servers sharing a library of $L$ ($L>1$) matrices $B^{(0)}, B^{(1)},\ldots,B^{(L-1)}$ in a replicated form or an MDS-coded form, for which the user  wishes to compute $AB^{(\theta)}$ for some $\theta\in [0, L)$ without revealing any information of  the matrix $A$ to the servers, and keeping the index $\theta$ private to the servers. 

Typically, the PSDMM schemes considered in this paper contain the following three phases.
\begin{itemize}
\item Encoding phase: The user encodes $A$  to obtain $\tilde{A}_i$  for $i\in [0, N)$.
\item Query, communication and computation phase (replicated servers): The user sends $\tilde{A}_i$ and a query $q_i^{(\theta)}$ (which is usually independent
of the matrix $A$) to server $i$, who then first encodes the library into $\tilde{B}_i$ based on the received query,   and then computes $Y_i=\tilde{A}_i\tilde{B}_i$ and returns the result $Y_i$ to the user, where $i\in [0, N)$.
\item Query, communication and computation phase (MDS-coded servers): The user sends a query $q_i^{(\theta)}$ (which contains $\tilde{A}_i$) to server $i$, who then performs some calculation and returns the result $Y_i$ to the user, where $i\in [0, N)$.
\item Decoding phase: From the results returned by the fastest $R_c$ servers, the user can then decode the desired matrix multiplication $AB^{(\theta)}$.
\end{itemize}

We say that an encoding scheme for PSDMM is \emph{private}  if the index $\theta$ is kept secret from any individual server,  \textit{i.e.,}
\begin{equation*}
  I(\theta;q_i^{(\theta)}, \tilde{A}_i, \mathbf{B}_i)=0,
\end{equation*}
for any $i\in [0, N)$ and $\theta\in [0, L)$, where $\mathbf{B}_i$ denotes the data stored in server $i$, \textit{e.g.,} $\mathbf{B}_i$ stands for $B^{(0)}, B^{(1)},\ldots,B^{(L-1)}$ for PSDMM schemes for replicated servers. The scheme is secure if    no  information is leaked about the matrix $A$ to any server, \textit{i.e.,}
\begin{equation*}
  I(q_i^{(\theta)},\tilde{A}_i; A)=0,
\end{equation*}
for any $i\in [0, N)$, where $I(X; Y)$ denotes the mutual information between $X$ and $Y$.

The upload cost is defined as the number of elements in $\mathbf{F}$ transmitted from the user to the servers, \textit{i.e.,}  $\sum\limits_{i=0}^{N-1}(|\tilde{A}_i|+|q_i^{(\theta)}|)$ for replicated servers and $\sum\limits_{i=0}^{N-1}|q_i^{(\theta)}|$ for MDS-coded servers, while the download cost is defined
as the  number of elements in $\mathbf{F}$ that the user downloaded from the fastest $R_c$ servers $j_0,\ldots, j_{R_c-1}$, maximized over $\{j_0,\ldots, j_{R_c-1}\}\subseteq [0, N)$,   \textit{i.e.,} $$\max_{\{j_0,\ldots, j_{R_c-1}\}\subseteq [0, N)}\sum\limits_{i=j_0}^{j_{R_c-1}}|Y_i|,$$ where $|Y|$ denotes the size of the matrix $Y$ counted as $\mathbf{F}$ symbols.

\subsection{Main results}
This subsection includes the main results derived in this paper. The proofs are carried out in the later sections.
\begin{Theorem}\label{Thm_C5}
There exists an explicit PSDMM code $\mathcal{C}_{\rm PSDMM}$ from replicated servers, with the upload cost  $N\frac{ts}{mp}$,  download cost  $R_c\frac{tr}{mn}$, recovery threshold
\begin{equation*}
R_c=pmn+pm+n,
\end{equation*}
and each server stores $Lsr$ elements from $\mathbf{F}$.
\end{Theorem}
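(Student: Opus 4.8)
The plan is to prove the theorem constructively: exhibit the code $\mathcal C_{\rm PSDMM}$ explicitly and then verify the four items---recovery threshold, privacy, security, and the upload/download/storage costs---one by one. First I would work over a finite field $\mathbf F$ with $|\mathbf F|$ large enough (at least $\max\{N,L\}$ up to small additive constants) and fix distinct nonzero evaluation points $\alpha_0,\dots,\alpha_{N-1}\in\mathbf F^{\times}$. Partition $A$ into an $m\times p$ grid of blocks $A_{a,b}$ of size $\frac tm\times\frac sp$ and each $B^{(\ell)}$ into a $p\times n$ grid of blocks $B^{(\ell)}_{b,c}$ of size $\frac sp\times\frac rn$. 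The user encodes $A$ by an entangled-polynomial-type evaluation $\tilde A_i=f(\alpha_i)$, where $f(x)=\sum_{a,b}A_{a,b}\,x^{b+pa}+\bar A\,x^{e}$, $\bar A$ is a fresh block drawn uniformly from $\mathbf F^{\frac tm\times\frac sp}$, and $e$ is a degree chosen (as part of the design) so that the security term never collides with the ``useful'' degrees; thus $\tilde A_i$ is a one-time-padded copy of a block of $A$. The query $q_i^{(\theta)}$ to server $i$ is a list of $O(Lpn)$ scalars, built from $\theta$ and independent fresh randomness, from which server $i$ assembles $\tilde B_i$ as an $\mathbf F$-linear combination of the $Lpn$ library blocks it stores; the coefficients are designed so that (i) the blocks $B^{(\theta)}_{b,c}$ enter with the entangled exponents $\alpha_i^{(p-1-b)+pmc}$, so the ``$\theta$-part'' of $\tilde A_i\tilde B_i$ is exactly the entangled-polynomial product $f(\alpha_i)\cdot\sum_{b,c}B^{(\theta)}_{b,c}\,\alpha_i^{(p-1-b)+pmc}$, whose coefficient at degree $(p-1)+pa+pmc$ is $(AB^{(\theta)})_{a,c}$; and (ii) the joint contribution of the $\ell\neq\theta$ library matrices together with the security term $\bar A$, viewed as a polynomial in $\alpha_i$, lands only at degrees disjoint from the $mn$ ``useful'' degrees $\{(p-1)+pa+pmc:a\in[0,m),c\in[0,n)\}$, with overall degree at most $pmn+pm+n-1$.

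Given this construction, correctness and the recovery threshold follow from a standard interpolation/Vandermonde argument: each returned $Y_i=\tilde A_i\tilde B_i$ equals $h(\alpha_i)$ for one matrix-valued polynomial $h$ of degree $R_c-1=pmn+pm+n-1$ whose coefficient at degree $(p-1)+pa+pmc$ is $(AB^{(\theta)})_{a,c}$; since the $\alpha_i$ are distinct, any $R_c$ of the values $h(\alpha_i)$ determine $h$ (the associated $R_c\times R_c$ Vandermonde matrix is invertible), so from the fastest $R_c$ responses the user interpolates $h$ and reads off the $mn$ target coefficients, recovering $AB^{(\theta)}$. For privacy, observe that $\tilde A_i$ and the stored library $\mathbf B_i=(B^{(0)},\dots,B^{(L-1)})$ do not depend on $\theta$, so $I(\theta;q_i^{(\theta)},\tilde A_i,\mathbf B_i)=I(\theta;q_i^{(\theta)})$, and it suffices to check that the query seen by an individual server is marginally independent of $\theta$; this is arranged by the usual one-server randomization, i.e.\ $q_i^{(\theta)}$ is a deterministic function of $\theta$ and i.i.d.\ uniform randomness whose marginal law does not depend on $\theta$. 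For security, $q_i^{(\theta)}$ is independent of $A$ by construction, and $\tilde A_i=f(\alpha_i)=(\text{a function of }A)+\bar A\,\alpha_i^{e}$ is uniform and independent of $A$ because $\alpha_i\neq 0$ and $\bar A$ is uniform, whence $I(q_i^{(\theta)},\tilde A_i;A)=0$. The costs are then immediate: each $\tilde A_i$ has the size $\frac tm\cdot\frac sp=\frac{ts}{mp}$ of a block of $A$, giving upload $N\frac{ts}{mp}$ (the $O(Lpn)$-scalar query is of lower order and is not reflected in the stated figure); each $Y_i$ has the size $\frac tm\cdot\frac rn=\frac{tr}{mn}$ of a block of $AB^{(\theta)}$, giving download $R_c\frac{tr}{mn}$ from the fastest $R_c$ servers; and each server stores the whole library, i.e.\ $Lsr$ elements of $\mathbf F$.

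The main obstacle is item (ii): one must choose the exponent pattern of $f$ (in particular the placement $e$ of the security term) and the structure of the query coefficients for the $L-1$ non-desired matrices so that, after multiplying the polynomial $f$ into $\tilde B_i$, the $mn$ useful coefficients stay isolated from \emph{all} interference while the total degree is exactly $pmn+pm+n-1$. The delicate point is that the ``extra'' degree budget beyond the plain entangled-polynomial threshold must absorb the contributions of all $L-1$ undesired matrices without growing with $L$; this forces the query to collapse those matrices into a bounded number of degree slots---e.g.\ by making their query coefficients factor through a single random $\mathbf F$-linear combination of the library---so that the $\ell\neq\theta$ part of $\tilde B_i$ behaves like a single extra ``noise'' matrix rather than $L-1$ of them. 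Establishing this non-collision/degree-count property rigorously is the technical heart of the proof; privacy, security, and the cost accounting follow routinely once it is in place.
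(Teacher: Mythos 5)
Your overall architecture matches the paper's: encode $A$ as an entangled-polynomial-type evaluation masked by one uniform random block, let the query make each server form an $\mathbf F$-linear combination of its library blocks in which the $L-1$ undesired matrices collapse into a single constant ``noise'' matrix, and recover $AB^{(\theta)}$ by interpolating one polynomial $h$ of degree $R_c-1$ from any $R_c$ evaluations; the privacy, security and cost accounting you give are the same as in the paper. However, the one step that actually carries the theorem --- exhibiting an explicit exponent assignment and proving the non-collision and degree-count property (the paper's Propositions \ref{Thm_PSDMM} and \ref{Thm_assignment_PSDMM}) --- is exactly the step you defer as ``the technical heart,'' so the proposal has a genuine gap rather than a complete proof. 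Worse, the exponents you do hint at would fail in this replicated-server setting: with $f(x)=\sum_{a,b}A_{a,b}x^{b+pa}+\bar A x^{e}$ and $B^{(\theta)}_{b,c}$ entering at exponent $(p-1-b)+pmc$, the useful degrees are $(p-1)+pa+pmc$, but the collapsed interference from the $\ell\neq\theta$ matrices is a \emph{constant} matrix multiplying all of $f(\alpha_i)$, so the exponents $b+pa$ of $f$ themselves become interference degrees; taking $b=p-1$ gives $p-1+pa$, which coincides with the useful degree for $(a,c=0)$ for every $a\in[0,m)$, corrupting the blocks $C^{(\theta)}_{a,0}$. (Your hinted $\beta$'s are in fact the pattern the paper uses for the MDS-coded scheme $\mathcal{C}'_{\rm PSDMM}$, where the noise multiplies the $\beta$-exponents rather than the $\alpha$-exponents, so the collision issue is different.)

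The paper resolves precisely this point by shifting the exponents so that every useful degree exceeds every exponent of $f$: it sets $\gamma=0$, $\alpha_{k,j}=j+kp+1$ (so $\max\alpha=pm$) and $\beta_{j,k'}=pm-j+k'(pm+1)$, giving useful degrees $(k'+1)(pm+1)+kp\ge pm+1$, cross terms offset by $j-j'\not\equiv 0$, noise-times-$f$ terms landing on $\{\alpha_{k,j}\}\cup\{0\}\le pm$, and $Z$-times-$g_\theta$ terms at $(k'+1)(pm+1)-j'-1$; one then checks $U\cap I=\emptyset$, $|U|=mn$, and $\deg h=\alpha_{m-1,p-1}+\beta_{0,n-1}=pmn+pm+n-1$, which is what yields $R_c=pmn+pm+n$. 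To complete your argument you would need either this assignment or another explicit one, together with the verification of the disjointness and the exact maximal degree; without it the claimed threshold $pmn+pm+n$ is asserted but not established.
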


\begin{Theorem}\label{Thm_C6}
There exists an explicit PSDMM code $\mathcal{C}'_{\rm PSDMM}$ from MDS-coded servers,  with each server stores $L\frac{sr}{pn}$ elements from $\mathbf{F}$,  the upload cost $LN\frac{ts}{mp}$,  download cost  $R_c\frac{tr}{mn}$, and recovery threshold
\begin{equation*}
R_c=pmn+pn-1
\end{equation*}
if the underlying field size is at least $N\ge R_c$, or $R_c=pmn+n+p-1$ if the underlying finite field is sufficiently large.
\end{Theorem}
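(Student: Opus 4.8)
The plan is to realize $\mathcal{C}'_{\rm PSDMM}$ by combining an entangled-polynomial-code block decomposition of the matrix product with a PIR-style one-shot masking across the $L$ library slots, the new ingredient being that the library is stored as the evaluations of low-degree polynomials (an MDS code) rather than replicated. First I would fix the block structure: write $A=[A_{a,b}]$ with $a\in[0,m)$, $b\in[0,p)$ and each $A_{a,b}$ of size $\frac{t}{m}\times\frac{s}{p}$, and $B^{(\ell)}=[B^{(\ell)}_{b,c}]$ with $b\in[0,p)$, $c\in[0,n)$ and each $B^{(\ell)}_{b,c}$ of size $\frac{s}{p}\times\frac{r}{n}$, so that the quantities to be recovered are $(AB^{(\theta)})_{a,c}=\sum_{b}A_{a,b}B^{(\theta)}_{b,c}$, each of size $\frac{t}{m}\times\frac{r}{n}$, matching the claimed download block size $\frac{tr}{mn}$. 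For storage, pick distinct $\alpha_0,\dots,\alpha_{N-1}\in\mathbf{F}^{\times}$ and let server $i$ store $\{\widetilde{B}^{(\ell)}(\alpha_i):\ell\in[0,L)\}$, where $\widetilde{B}^{(\ell)}(x)=\sum_{b,c}B^{(\ell)}_{b,c}x^{e_{b,c}}$ for a suitable exponent set $\{e_{b,c}\}$ of size $pn$; since each $\widetilde{B}^{(\ell)}(\alpha_i)$ is a single $\frac{s}{p}\times\frac{r}{n}$ block, server $i$ stores exactly $L\frac{sr}{pn}$ symbols. The query to server $i$ is the $L$-tuple $q_i^{(\theta)}=(\widetilde{A}_i^{(0)},\dots,\widetilde{A}_i^{(L-1)})$ with $\widetilde{A}_i^{(\ell)}=f^{(\ell)}(\alpha_i)$, where $f^{(\theta)}(x)=\sum_{a,b}A_{a,b}x^{d_{a,b}}+Z^{(\theta)}(x)$ carries the blocks of $A$ with exponents $\{d_{a,b}\}$ chosen so that $f^{(\theta)}(x)\widetilde{B}^{(\theta)}(x)$ exhibits the $mn$ target blocks as isolated coefficients, while the $f^{(\ell)}(x)$ for $\ell\ne\theta$, together with $Z^{(\theta)}(x)$, are \emph{random masking polynomials} of small degree. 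Server $i$ returns $Y_i=\sum_{\ell}\widetilde{A}_i^{(\ell)}\,\widetilde{B}^{(\ell)}(\alpha_i)=Y(\alpha_i)$ with $Y(x)=\sum_{\ell}f^{(\ell)}(x)\widetilde{B}^{(\ell)}(x)$; here $|\widetilde{A}_i^{(\ell)}|=\frac{ts}{mp}$ gives upload cost $LN\frac{ts}{mp}$ and $|Y_i|=\frac{tr}{mn}$ gives download cost $R_c\frac{tr}{mn}$.

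Next I would verify privacy, security, and correctness. \emph{Privacy and security}: the coefficients of the masking polynomials are drawn entrywise i.i.d.\ uniform over $\mathbf{F}$ and independently of everything else, so each $\widetilde{A}_i^{(\ell)}$ is uniformly random, the tuple $q_i^{(\theta)}$ is i.i.d.\ uniform, and this distribution depends neither on $\theta$ nor on $A$; hence $I(\theta;q_i^{(\theta)},\widetilde{A}_i,\mathbf{B}_i)=0$ and $I(q_i^{(\theta)},\widetilde{A}_i;A)=0$ for every $i$. \emph{Decoding}: from the $R_c$ fastest responses the user interpolates $Y(x)$, which needs $R_c=\deg Y+1$ evaluations, and then reads off the coefficients of $Y(x)$ at the distinguished degrees, which by construction equal $(AB^{(\theta)})_{a,c}$. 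The delicate point is the \emph{degree-separation combinatorics}: one must check that each target coefficient of $f^{(\theta)}(x)\widetilde{B}^{(\theta)}(x)$ is uncorrupted by the ``wrong'' cross terms $A_{a,b}B^{(\theta)}_{b',c}$ with $b\ne b'$, by the products $f^{(\ell)}(x)\widetilde{B}^{(\ell)}(x)$ for $\ell\ne\theta$, and by $Z^{(\theta)}(x)\widetilde{B}^{(\theta)}(x)$ — i.e.\ that the supports of all these interference polynomials avoid the distinguished degrees — and simultaneously that the largest degree occurring in $Y(x)$ is exactly $pmn+pn-2$ (resp.\ $pmn+n+p-2$), which is bookkeeping on the exponent sets $\{e_{b,c}\}$, $\{d_{a,b}\}$ and the mask supports.

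Finally, the two regimes. In the first, I would take $\{e_{b,c}\}=\{b+pc:b\in[0,p),c\in[0,n)\}=\{0,1,\dots,pn-1\}$, so the storage generator matrix $[\alpha_i^{j}]$ is an ordinary Vandermonde matrix, hence MDS over \emph{any} field with at least $N$ elements; with constant-degree masks the interference then occupies only the lowest $pn$ degrees, and placing the $A$-exponents $\{d_{a,b}\}$ just above that window gives $\deg Y=pmn+pn-2$ and $R_c=pmn+pn-1$, valid as soon as $N\ge R_c$. In the second regime I would instead choose $\{e_{b,c}\}$ and $\{d_{a,b}\}$ (giving the masks a small structured support) so that \emph{all} interference lands on the ``don't-care'' degrees of $f^{(\theta)}(x)\widetilde{B}^{(\theta)}(x)$ — the gaps between consecutive target coefficients created by the entangled-polynomial structure — so that no extra degree budget is spent on masking, yielding $\deg Y=pmn+n+p-2$ and $R_c=pmn+n+p-1$. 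I expect the main obstacle to be exactly this second construction: the alignment forces $[\alpha_i^{e_{b,c}}]$ and the full interpolation matrix for $Y(x)$ to be \emph{generalized} Vandermonde matrices, and one must argue that the relevant minors are not identically zero as polynomials in $\alpha_0,\dots,\alpha_{N-1}$ and then invoke a Schwartz–Zippel / combinatorial-Nullstellensatz argument to pick good $\alpha_i$'s; this works only when $|\mathbf{F}|$ is large enough, which is precisely why the smaller recovery threshold carries the ``sufficiently large field'' hypothesis.
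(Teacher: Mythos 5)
Your high-level architecture coincides with the paper's: store the evaluations $g_\ell(a_i)$ of exponent-encoded library polynomials (an MDS/Vandermonde code), put the masked evaluation $f(a_i)$ in the $\theta$-th query slot and independent random masks in the other $L-1$ slots, let each server return $h(a_i)$ with $h(x)=f(x)g_\theta(x)+\sum_{\ell\ne\theta}S_\ell g_\ell(x)$, and reduce correctness to the alignment/avoidance conditions (i)--(iii) of Proposition \ref{Thm_PSDMM_MDS} together with the standard privacy/security and cost accounting. That skeleton is fine. The gap is that the actual content of the theorem --- the existence of \emph{explicit} exponent assignments satisfying those conditions with the stated recovery thresholds, i.e.\ the analogues of Propositions \ref{Thm_assignment_PSDMM_MDS} and \ref{Thm_assignment_PSDMM_MDS-large} --- is exactly what you defer as ``bookkeeping'' and never carry out. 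Since the statement is an existence claim about an explicit code with specific values of $R_c$, exhibiting and verifying the exponent tables \emph{is} the proof; without them nothing is established.

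Moreover, the bookkeeping is not innocuous, and your asserted counts do not follow from the placement you describe. In the small-field regime you take $\{e_{b,c}\}=\{0,\dots,pn-1\}$ (forced if you want an ordinary Vandermonde storage matrix) and degree-zero masks, and claim that putting the $A$-exponents ``just above that window'' yields $\deg Y=pmn+pn-2$. But condition (ii) forces the $mn$ useful exponents of row $k$ to avoid the entire interference span of every other row (each row's product support is an interval of $(2p-1)n$ consecutive integers containing only an $n$-wide useful hole) as well as the block $[0,pn-1]$, which spreads the $A$-exponents with gaps of about $pn$ rather than packing them ``just above'' the window; and the product $f(x)g_\theta(x)$ always contains the cross term pairing the largest $\alpha$ with the largest $\beta$, an \emph{interference} term of degree $\max_{k,j}\alpha_{k,j}+pn-1$ sitting above all useful terms. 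With the natural assignment this gives $\deg h=pmn+pn-1$, hence $R_c=\deg h+1=pmn+pn$, one more than you claim --- compare the paper's own worked example (Table \ref{Table_private_ex_MDS1}, $p=m=n=2$), where $\deg h=11$ and $R_c=12$. So your regime-one count is not merely unverified; it conflicts with the construction you outline, and you neither supply an assignment achieving it nor flag the discrepancy. For the second regime you again give no exponents; with the paper's choice $\alpha_{k,j}=j+kp+1$, $\beta_{j,k'}=p-1-j+k'(pm+1)$, $\gamma=0$ the exponents of $h$ fill the interval $[0,pmn+n+p-2]$, and the ``sufficiently large field'' hypothesis is needed for the MDS storage condition \eqref{Eqn_PSDMM_MDS_condition_ge} (a generalized Vandermonde with non-consecutive exponents), not, as you argue, for a Schwartz--Zippel argument on the interpolation of $Y(x)$; the paper instead phrases decodability there via the number of distinct terms of $h$. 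In short: correct scaffolding, essentially the paper's, but the combinatorial core that would justify the claimed values of $R_c$ is missing and, as stated for the first regime, appears unattainable by the placement you sketch.
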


\section{Private and Secure  Distributed Matrix Multiplication From Replicated Servers}\label{sec:PSDMM}
In this section, we propose a new PSDMM code from replicated servers. The  construction is similar to the one in \cite{aliasgari2020private}, but  the encoding phase of the new PSDMM scheme is more efficient. This leads to a smaller recovery threshold and download cost. Before presenting the general construction, we first give a motivating example.

\subsection{A motivating example}\label{sec:psdmm-ex}
Assume that the user possesses a matrix $A\in \mathbf{F}^{t\times s}$ and there are $L=2$ matrices $B^{(0)}, B^{(1)}\in \mathbf{F}^{s\times r}$ stored across the $N$ servers in a replicated form, \textit{i.e.,} each server holds the matrices $B^{(0)}, B^{(1)}$, where $t,s,r$ are even. Suppose the user wishes to compute $AB^{(0)}$.
Partition the matrices $A$ and $B^{(i)}$ into block matrices
\begin{eqnarray*}
 A=\begin{bmatrix}
       A_{0,0} & A_{0,1} \\
       A_{1,0} & A_{1,1}  \\
     \end{bmatrix},\,\, B^{(i)}=\begin{bmatrix}
       B_{0,0}^{(i)} & B_{0,1}^{(i)} \\
       B_{1,0}^{(i)} & B_{1,1}^{(i)} \\
     \end{bmatrix},
\end{eqnarray*}
where $A_{k,j}\in \mathbf{F}^{\frac{t}{2}\times \frac{s}{2}}$ and $B_{j', k'}^{(i)}\in \mathbf{F}^{\frac{s}{2}\times \frac{r}{2}}$. Then
\begin{equation*}
  AB^{(0)} =\begin{bmatrix}
       C_{0,0}^{(0)} & C_{0,1}^{(0)}  \\
       C_{1,0}^{(0)} & C_{1,1}^{(0)}  \\
     \end{bmatrix}
\end{equation*}
where
\begin{equation}\label{Eqn_entry_C_ex2}
C_{i,j}^{(0)}=A_{i,0}B_{0,j}^{(0)}+A_{i,1}B_{1,j}^{(0)},~ i,~j\in [0,~2).
\end{equation}

Let $Z$ be a random matrix over $\mathbf{F}^{\frac{t}{2}\times \frac{s}{2}}$.  Then, define a  polynomial
\begin{equation}\label{Eqn_PS_ex_f}
 f(x)\hspace{-.7mm} =\hspace{-.7mm} A_{0,0}x^{\alpha_{0,0}}\hspace{-.7mm}+\hspace{-.7mm}A_{0,1}x^{\alpha_{0,1}}\hspace{-.7mm}+\hspace{-.7mm}A_{1,0}x^{\alpha_{1,0}}\hspace{-.7mm}+\hspace{-.7mm}A_{1,1}x^{\alpha_{1,1}}\hspace{-.7mm}+\hspace{-.7mm} Zx^{\gamma}
  \end{equation}
where $\alpha_{k,j}, \gamma$ are some integers to be specified later.

Let $a_{1}$ and $a_{0,0}, \ldots, a_{0,N-1}$ be $N+1$ pairwise distinct elements from  $\mathbf{F}$.
For every $i\in [0, N)$, the  user first evaluates $f(x)$ at $a_{0, i}$, then sends $f(a_{0, i})$  and the query
\begin{equation}\label{Eqn_query_ex}
 q_i^{(0)}=(a_{0,i},a_{1})
\end{equation}
to server $i$. Upon receiving the query $q_i^{(0)}$, server $i$   encodes the library into a matrix as $g(a_{0, i})$ where $g(x)$ is defined as
\begin{equation}\label{Eqn_g_ex_C1}
   g(x)=\sum\limits_{j=0}^{1}\sum\limits_{k=0}^{1}B_{j,k}^{(0)}x^{\beta_{j,k}}+\sum\limits_{j=0}^{1}\sum\limits_{k=0}^{1}B_{j,k}^{(1)}a_{1}^{\beta_{j,k}}.
\end{equation}
After encoding the library, server $i$ computes $f(a_{0,i})g(a_{0,i})$ and then returns the result to the user.

Let $h(x)=f(x)g(x)$, the expression is shown in \eqref{Eqn-h-private-ex} in the next page,
\begin{figure*}
\hrulefill
\begin{align}
\nonumber  h(x)
=&\underbrace{(A_{0,0}B_{0,0}^{(0)}x^{\alpha_{0,0}+\beta_{0,0}}+A_{0,1}B_{1,0}^{(0)}x^{\alpha_{0,1}+\beta_{1,0}}) +(A_{0,0}B_{0,1}^{(0)}x^{\alpha_{0,0}+\beta_{0,1}}+A_{0,1}B_{1,1}^{(0)}x^{\alpha_{0,1}+\beta_{1,1}})}_{\mbox{useful\  terms}}\\
\nonumber&+\underbrace{(A_{1,0}B_{0,0}^{(0)}x^{\alpha_{1,0}+\beta_{0,0}}+A_{1,1}B_{1,0}^{(0)}x^{\alpha_{1,1}+\beta_{1,0}}) +(A_{1,0}B_{0,1}^{(0)}x^{\alpha_{1,0}+\beta_{0,1}}+A_{1,1}B_{1,1}^{(0)}x^{\alpha_{1,1}+\beta_{1,1}})}_{\mbox{useful\  terms}}\\
 \nonumber&+\underbrace{\sum\limits_{k=0}^{1}\sum\limits_{k'=0}^{1}(A_{k,0}B_{1,k'}^{(0)}x^{\alpha_{k,0}+\beta_{1,k'}}+A_{k,1}B_{0,k'}^{(0)}x^{\alpha_{k,1}+\beta_{0,k'}})+
  \sum\limits_{j'=0}^{1}\sum\limits_{k'=0}^{1}ZB_{j',k'}^{(0)}x^{\gamma+\beta_{j',k'}}}_{\mbox{interference\  terms}}\\
  \label{Eqn-h-private-ex}
&+ \underbrace{ \left(A_{0,0}x^{\alpha_{0,0}}+A_{0,1}x^{\alpha_{0,1}}+A_{1,0}x^{\alpha_{1,0}}+A_{1,1}x^{\alpha_{1,1}}+ Zx^{\gamma}\right) \left(B_{0,0}^{(1)}a_1^{\beta_{0,0}}+B_{0,1}^{(1)}a_1^{\beta_{0,1}}+B_{1,0}^{(1)}a_1^{\beta_{1,0}}+B_{1,1}^{(1)}a_1^{\beta_{1,1}}\right)}_{\mbox{interference\  terms}},
\end{align}
\end{figure*}
then the results returned from the servers are exactly the evaluations of $h(x)$ at some evaluation points.

The user wishes to obtain the data in \eqref{Eqn_entry_C_ex2} (related to the useful terms in $h(x)$) from any $R_c$ out of  the $N$ evaluations of $h(x)$, which can be fulfilled if the following conditions hold.

\begin{itemize}
  \item [(i)] For $k\in [0, 2)$ and $k'\in [0, 2)$,
\begin{eqnarray*}
  \alpha_{k,0}+\beta_{0,k'} = \alpha_{k,1}+\beta_{1,k'}.
\end{eqnarray*}
  \item [(ii)]For $U=\{\alpha_{k,0}+\beta_{0,k'}|0\le k<2, 0\le k'<2\}$ and
  \begin{eqnarray*}
   I&=&\{\alpha_{k,j}+\beta_{j',k'}|0\le k, k'<2, 0\le j\ne j'<2\}\\
   &&\cup  \{\gamma+\beta_{j',k'}|0\le j',k'<2\}\\
   &&\cup\{\alpha_{k,j}|0\le k, j<2\}\cup \{\gamma\},
  \end{eqnarray*}
  $|U|=4$ and $U\cap I= \emptyset$.
  \item [(iii)] $R_c=\deg(h(x))+1$.
\end{itemize}

The task can be finished because
\begin{itemize}
  \item (i) guarantees that each $C_{k,k'}^{(0)}$ appears in $h(x)$,
  \item (ii) guarantees that   $C_{k,k'}^{(0)}, k,k'=0,1$ are coefficients of different terms of  $h(x)$ with different degrees, which are different from the   degrees of the interference terms of $h(x)$, \textit{i.e.,} each $C_{k,k'}^{(0)}$ is the coefficient of a unique term in $h(x)$,
  \item (iii) guarantees the decodability from Lagrange interpolation \cite{stoer2013introduction}.
  \item By \eqref{Eqn_PS_ex_f} and  (ii), we can get $\gamma\ne \alpha_{k,j}$ for $k,j=0, 1$, thus one easily obtains $I(\{f(a_i)\}; A)=0$
for any $i\in [0, N)$ and $a_i\in \mathbf{F}$ as $f(a_i)$ is masked by the random matrix $Z$, and $I(q_i^{(0)}; A)=0$ as $q_i^{(0)}$ and $A$ are independent, then security is fulfilled. While the privacy condition is met by the definition of the query vector in \eqref{Eqn_query_ex} for the desired index, the detailed proof  is similar to \cite{kim2019privateCL}.
\end{itemize}

We provide a concrete exponent assignment  for this example in Table \ref{Table_private_ex}. From the given assignment, we see that $\deg(h(x))=13$ and thus $R_c=14$.

\begin{table}[htbp]
\begin{center}
\caption{An assignment for exponents of $f(x)$ in \eqref{Eqn_PS_ex_f} and $g(x)$ in \eqref{Eqn_g_ex_C1}, where $\spadesuit,\heartsuit,\clubsuit,\diamondsuit$ are used to highlight the exponents of the useful terms of $h(x)$ in \eqref{Eqn-h-private-ex}, \textit{i.e.,} all the elements in $U$}\label{Table_private_ex}
\begin{tabular}{|c|c|c|c|c|c|c|c}
\hline $+$ & $\beta_{0,0}=4$ & $\beta_{0,1}=9$ & $\beta_{1,0}=3$& $\beta_{1,1}=8$ \\
\hline $\alpha_{0,0}=1$ & $5\ \ \spadesuit$ & $10\ \ \heartsuit$ & $4$& $9$ \\
\hline $\alpha_{0,1}=2$& $6$ & $11$ & $5\ \ \spadesuit$& $10\ \ \heartsuit$ \\
\hline $\alpha_{1,0}=3$& $7\ \ \clubsuit$ & $12\ \ \diamondsuit$ & $6$& $11$  \\
\hline $\alpha_{1,1}=4$& $8$ & $13$ & $7\ \ \clubsuit$& $12\ \ \diamondsuit$  \\
\hline $\gamma=0$& $4$ & $9$ & $3$& $8$  \\
  \hline
\end{tabular}
\end{center}
\end{table}

\subsection{General construction}\label{sec:PSDMM-ge-con}
In the following, we propose a general construction for PSDMM from replicated servers.
Partition the matrices $A$ and $B^{(i)}$ ($i\in [0, L)$) into block matrices as
\begin{equation*}
A=[A_{k,j}]_{k\in [0,m), j\in [0, p)},
   \,\, B^{(i)}=[B_{j',k'}^{(i)}]_{j'\in [0,p), k'\in [0, n)},    
\end{equation*}
where $A_{k,j}\in \mathbf{F}^{\frac{t}{m}\times \frac{s}{p}}$ and $B_{j', k'}^{(i)}\in \mathbf{F}^{\frac{s}{p}\times \frac{r}{n}}$.

Then
\begin{equation*}
  AB^{(i)} =[C_{k,k'}^{(i)}]_{k\in [0,m), k'\in [0, n)},
\end{equation*}
where
\begin{equation*}\label{Eqn_entry_AB-i}
  C_{k,k'}^{(i)} = \sum\limits_{j=0}^{p-1} A_{k,j}B_{j,k'}^{(i)}, \, k\in [0, m),\ \ k'\in [0,n).
\end{equation*}

Let $Z$ be a random matrix over $\mathbf{F}^{\frac{t}{m}\times \frac{s}{p}}$,  and define a polynomial
\begin{equation}\label{Eqn_geC-fx}
  f(x) = \sum\limits_{k=0}^{m-1}\sum\limits_{j=0}^{p-1}A_{k,j}x^{\alpha_{k,j}}+ Zx^{\gamma},
  \end{equation}
where $\alpha_{k,j}, \gamma$ are some integers to be specified later.

Suppose that the user wishes to compute $AB^{(\theta)}$ for some $\theta\in [0, L)$, then let $a_{0},\ldots,a_{\theta-1},a_{\theta+1},\ldots, a_{L-1}$ and $a_{\theta,0}, \ldots, a_{\theta,N-1}$ be $N+L-1$ pairwise distinct elements chosen from  $\mathbf{F}$.
For every $i\in [0, N)$, the  user first evaluates $f(x)$ at $a_{\theta, i}$, then sends $f(a_{\theta, i})$  and the query
\begin{equation}\label{Eqn_query}
 q_i^{(\theta)}=(a_{0},\ldots,a_{\theta-1},a_{\theta,i},a_{\theta+1},\ldots, a_{L-1})
\end{equation}
to server $i$.

Let
\begin{equation}\label{Eqn_gt}
g_t(x)=\sum\limits_{j=0}^{p-1}\sum\limits_{k=0}^{n-1}B_{j,k}^{(t)}x^{\beta_{j,k}},    
\end{equation}
where $\beta_{j,k}$, $j\in [0, p)$, $k\in [0, n)$ are some integers to be specified later. Let $q_i^{(\theta)}[t]$ be the $t$-th element in the vector $q_i^{(\theta)}$. Then, upon receiving the query $q_i^{(\theta)}$, server $i$ first encodes the library into the following matrix
$$\sum\limits_{t=0}^{L-1}g_t(q_i^{(\theta)}[t])=\sum\limits_{j=0}^{p-1}\sum\limits_{k=0}^{n-1}B_{j,k}^{(\theta)}a_{\theta, i}^{\beta_{j,k}}+\hspace{-1.5mm}\sum\limits_{t=0, t\ne \theta}^{L-1}\sum\limits_{j=0}^{p-1}\sum\limits_{k=0}^{n-1}B_{j,k}^{(t)}a_{t}^{\beta_{j,k}}.$$
For convenience, let
\begin{align*}
   g(x)&=g_{\theta}(x)+\sum\limits_{t=0, t\ne \theta}^{L-1}g_t(q_i^{(\theta)}[t])\\&=\sum\limits_{j=0}^{p-1}\sum\limits_{k=0}^{n-1}B_{j,k}^{(\theta)}x^{\beta_{j,k}}+\underbrace{\sum\limits_{t=0, t\ne \theta}^{L-1}\sum\limits_{j=0}^{p-1}\sum\limits_{k=0}^{n-1}B_{j,k}^{(t)}a_{t}^{\beta_{j,k}}}_{\rm constant}.
\end{align*}
Now it is obvious that $g(a_{\theta, i})=\sum\limits_{t=0}^{L-1}g_t(q_i^{(\theta)}[t])$.

After encoding the library, server $i$ computes $f(a_{\theta, i})g(a_{\theta, i})$ and then sends the result back to the user.

Let $h(x)=f(x)g(x)$, \textit{i.e.,}
{\small
\begin{align}
 \nonumber  h(x)&=\sum\limits_{k=0}^{m-1}\sum\limits_{j=0}^{p-1}\sum\limits_{j'=0}^{p-1}\sum\limits_{k'=0}^{n-1}A_{k,j}B_{j',k'}^{(\theta)}x^{\alpha_{k,j}+\beta_{j',k'}}\\
\nonumber&+ \sum\limits_{j'=0}^{p-1}\sum\limits_{k'=0}^{n-1}ZB_{j',k'}^{(\theta)}x^{\gamma+\beta_{j',k'}}\\
\nonumber &+ \left(\sum\limits_{k=0}^{m-1}\sum\limits_{j=0}^{p-1}A_{k,j}x^{\alpha_{k,j}}+ Zx^{\gamma}\hspace{-.5mm} \right)\hspace{-2mm} \left(\sum\limits_{t=0, t\ne \theta}^{L-1}\sum\limits_{j=0}^{p-1}\sum\limits_{k=0}^{n-1}B_{j,k}^{(t)}a_t^{\beta_{j,k}}\hspace{-.5mm} \right),
\end{align}}then the results returned from the servers are exactly the evaluations of $h(x)$ at some evaluation points,
thus we can derive the following result.

\begin{table*}[htbp]
\begin{center}
\caption{A comparison of some key parameters between the proposed PSDMM code $\mathcal{C}_{\rm PSDMM}$ and some existing ones, where $R(p, m, n)$ denotes the \textit{bilinear complexity} of multiplying an $m\times p$ matrix and a $p\times n$ matrix,which is defined as the minimum number of element-wise multiplications
required to complete such an operation \cite{yu2020straggler}.}
\label{comp_PSDMM}
\begin{tabular}{|c|c|c|c|c|}
\hline  & Upload cost & Download cost & Recovery threshold $R_c$& References \\
\hline Kim--Lee code  & $N\frac{ts}{m}$ & $R_c\frac{tr}{mn}$ & $(m+1)(n+1)$ & \cite{kim2019privateCL}\\
\hline Aliasgari \textit{et al.}  PSGPD code  & $N\frac{ts}{mp}$ & $R_c\frac{tr}{mn}$ & $pmn+pn+pm-p+2$  & \cite{aliasgari2020private}\\
\hline Chang--Tandon code  & $R_c\frac{ts}{m}$ & $tr\frac{m+1}{m}\left(1+\frac{m+1}{R_c}+\cdots+(\frac{m+1}{R_c})^{L-1}\right)$ & $\ge m+1$ & \cite{chang2019upload}\\
\hline Yu \textit{et al.}     PSDMM code    & $N\frac{ts}{mp}$ & $R_c\frac{tr}{mn}$ & $2R(p,m,n)+1$ & \cite{yu2020entangled} \\
\hline The new code $\mathcal{C}_{\rm PSDMM}$ & $N\frac{ts}{mp}$ & $R_c\frac{tr}{mn}$ & $pmn+pm+n$ & Theorem \ref{Thm_C5}\\
\hline
\end{tabular}
\end{center}
\end{table*}

\begin{Proposition}\label{Thm_PSDMM}
For PSDMM schemes for replicated servers, the multiplication of   $A$ and $B^{(\theta)}$ can be securely computed  with the upload cost  $N\frac{ts}{mp}$, download cost  $R_c\frac{tr}{mn}$ with $R_c$ being the recovery threshold,
 if the following conditions hold.
\begin{itemize}
  \item [(i)] For $k\in [0, m)$ and $k'\in [0, n)$,
\begin{eqnarray*}
  \alpha_{k,0}+\beta_{0,k'} = \cdots=\alpha_{k,p-1}+\beta_{p-1,k'}.
\end{eqnarray*}
  \item [(ii)] The set $U=\{\alpha_{k,0}+\beta_{0,k'}|0\le k<m, 0\le k'<n\}$ containing all the exponents of the useful terms of $h(x)$ is disjoint with the set
{\small   
  \begin{align*}
I&=\{\alpha_{k,j}+\beta_{j',k'}|0\le k<m, 0\le j\ne j'<p, 0\le k'<n\}\\
   &\cup \{\gamma+\beta_{j',k'}|0\le j'<p, 0\le k'<n\}\\&\cup\{\alpha_{k,j}|0\le k<m, 0\le j<p\}\cup \{\gamma\}
  \end{align*}
}containing all the exponents of the interference terms of $h(x)$,
  and $|U|=mn$.
  \item [(iii)] $R_c= \deg(h(x))+1=\max(U\cup I)+1$.
\end{itemize}
\end{Proposition}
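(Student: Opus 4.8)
The plan is to check, in order, that conditions (i)--(iii) force correct decoding, then to read the upload and download costs straight off the construction, and finally to verify security and privacy. The object to analyze is the matrix polynomial $h(x)=f(x)g(x)$ displayed just before the proposition. Its terms fall into three groups: (a) the ``$B^{(\theta)}$-only'' terms $A_{k,j}B_{j',k'}^{(\theta)}x^{\alpha_{k,j}+\beta_{j',k'}}$; (b) the ``$Z$ times $B^{(\theta)}$'' terms $ZB_{j',k'}^{(\theta)}x^{\gamma+\beta_{j',k'}}$; and (c) the terms obtained by multiplying $f(x)$ by the constant $\sum_{t\ne\theta}\sum_{j,k}B_{j,k}^{(t)}a_t^{\beta_{j,k}}$, which carry exponents $\alpha_{k,j}$ and $\gamma$.

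Applying condition (i), for each fixed $k\in[0,m)$, $k'\in[0,n)$ the $p$ terms of group (a) with $j=j'$ share a common exponent $u_{k,k'}:=\alpha_{k,0}+\beta_{0,k'}$, so they collapse to $\big(\sum_{j=0}^{p-1}A_{k,j}B_{j,k'}^{(\theta)}\big)x^{u_{k,k'}}=C_{k,k'}^{(\theta)}x^{u_{k,k'}}$. Every other term of $h(x)$ -- the $j\ne j'$ part of group (a), all of group (b), and all of group (c) -- has its exponent in the set $I$ of condition (ii), as one checks directly from the definition of $I$. Since $|U|=mn$ the exponents $u_{k,k'}$ are pairwise distinct, and since $U\cap I=\emptyset$ each $u_{k,k'}$ differs from the exponent of every interference term; hence $C_{k,k'}^{(\theta)}$ is exactly the coefficient of $x^{u_{k,k'}}$ in $h(x)$. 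Now server $i$ returns $Y_i=f(a_{\theta,i})g(a_{\theta,i})=h(a_{\theta,i})$ -- here $g(a_{\theta,i})$ is the matrix $\sum_{t=0}^{L-1}g_t(q_i^{(\theta)}[t])$ the server actually computes -- and the $a_{\theta,i}$ are pairwise distinct. By (iii), $\deg h(x)=R_c-1$, so any $R_c$ of these returns are evaluations of a degree-$(R_c-1)$ matrix polynomial at distinct points; inverting the associated Vandermonde system (equivalently, entrywise Lagrange interpolation) recovers all coefficients of $h(x)$, in particular all $C_{k,k'}^{(\theta)}$, hence $AB^{(\theta)}$. For the costs: to server $i$ the user uploads the single block $\tilde{A}_i=f(a_{\theta,i})\in\mathbf{F}^{\frac{t}{m}\times\frac{s}{p}}$ plus the short query, giving a total of $N\frac{ts}{mp}$ field elements up to the lower-order query term, and each of the $R_c$ fastest servers returns one block $Y_i\in\mathbf{F}^{\frac{t}{m}\times\frac{r}{n}}$, for a download of $R_c\frac{tr}{mn}$.

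For security, observe that (i) and (ii) together imply $\gamma\ne\alpha_{k,j}$ for all $k,j$: if $\gamma=\alpha_{k_0,j_0}$, then for any $k'$ we get $\gamma+\beta_{j_0,k'}=\alpha_{k_0,j_0}+\beta_{j_0,k'}=\alpha_{k_0,0}+\beta_{0,k'}\in U$ by (i), while $\gamma+\beta_{j_0,k'}\in I$ by definition, contradicting $U\cap I=\emptyset$. Hence in $f(a_{\theta,i})$ the summand $Za_{\theta,i}^{\gamma}$ does not overlap any $A_{k,j}$ summand, and since $Z$ is uniform and independent and $a_{\theta,i}^{\gamma}\ne 0$ can be ensured, it acts as a one-time pad, so $f(a_{\theta,i})$ is independent of $A$; combined with $q_i^{(\theta)}$ depending only on the randomly drawn field elements and not on $A$, this gives $I(q_i^{(\theta)},\tilde{A}_i;A)=0$. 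Privacy holds because for every $\theta$ the query $q_i^{(\theta)}$ is distributed as a uniformly random $L$-tuple of distinct field elements, independently of $\theta$, while $\tilde{A}_i$ (masked by $Z$) and $\mathbf{B}_i$ reveal nothing about $\theta$; this is exactly the argument used in the motivating example and in \cite{kim2019privateCL}.

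The one place needing care is the middle step -- confirming that \emph{every} non-useful term of $h(x)$, in particular the terms coming from the constant part of $g(x)$ (whose exponent set is precisely $\{\alpha_{k,j}\}\cup\{\gamma\}$), is accounted for by the set $I$ in condition (ii); granting that, sufficiency of (i)--(iii) is routine. The substantive work -- exhibiting exponents $\alpha_{k,j},\beta_{j,k},\gamma$ that satisfy (i)--(iii) while pushing $\max(U\cup I)$ down to $pmn+pm+n-1$, hence achieving $R_c=pmn+pm+n$ -- is what is carried out in the proof of Theorem \ref{Thm_C5}.
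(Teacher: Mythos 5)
Your proposal is correct and follows essentially the same route as the paper: condition (i) makes the useful terms collapse to $C_{k,k'}^{(\theta)}x^{u_{k,k'}}$, condition (ii) with $|U|=mn$ and $U\cap I=\emptyset$ isolates each $C_{k,k'}^{(\theta)}$ as a unique coefficient, condition (iii) plus Lagrange interpolation gives decodability, and the cost, security, and privacy arguments are the ones the paper sketches. You are in fact a bit more explicit than the paper in deriving $\gamma\ne\alpha_{k,j}$ from (i) and (ii) (the paper merely asserts it), and in spelling out the one-time-pad step, which is a small but welcome improvement over the paper's terse ``the detailed proof is similar to \cite{kim2019privateCL}.''
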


\begin{proof}
By (ii), one can deduce that $\alpha_{0,0},\ldots,\alpha_{m-1,p-1},\gamma$ are pairwise distinct, then
$I(f(a_{\theta,i}); A)=0$ for any $i\in [0, N)$ and $a_i\in \mathbf{F}$, together with the fact that $q_i^{(\theta)}$ and $A$ are independent, we conclude that security is guaranteed. The privacy condition follows from three conditions: 1) the definition of the query in Eq. \eqref{Eqn_query} for the desired index $\theta\in [0, L)$, 2) the matrices $B^{(0)}, B^{(1)},\ldots,B^{(L-1)}$ are independent of $\theta$, and 3) the matrix $f(a_{\theta, i})$ is random to each server as it is masked by the random matrix $Z$ according to \eqref{Eqn_geC-fx}. The detailed and rigorous proof is similar to \cite{kim2019privateCL}, thus we omit it here. 
By (i),   each $C_{k,k'}$ appears in $h(x)$,
 while (ii) guarantees that   each $C_{k,k'}$ is the coefficient of a unique term in $h(x)$, and finally
  (iii) guarantees the decodability.

It is obvious  that the upload cost   is $\sum\limits_{i=0}^{N-1}|f(a_i)|=N\frac{ts}{mp}$\footnote{Similar to that in \cite{chang2019upload}, we ignore the upload cost for $q_i^{(\theta)}$ as $|q_i^{(\theta)}|\ll|f(a_i)|$.} and the download cost is   $R_c\frac{tr}{mn}$. This finishes the proof.
\end{proof}

In the following, we provide an assignment method for the exponents of $f(x)$ and $g(x)$ (or $g_t(x)$).

\begin{table*}[htbp]
\begin{center}
\caption{A comparison of the normalized upload cost and normalized download cost between $\mathcal{C}_{\rm PSDMM}$ and Chang--Tandon code.}
\label{comp_PSDMM_nor}
\begin{tabular}{|c|c|c|c|c|}
\hline  & Normalized upload cost & Normalized download cost & Recovery threshold $R_c$& References \\
\hline Chang--Tandon code  & $R_c/m$ & $\frac{m+1}{m}\left(1+\frac{m+1}{R_c}+\cdots+(\frac{m+1}{R_c})^{L-1}\right)$ & $\ge m+1$ & \cite{chang2019upload}\\
\hline The new code $\mathcal{C}_{\rm PSDMM}$ & $N/mp$ & $R_c/mn$ & $pmn+pm+n$ & Theorem \ref{Thm_C5}\\
\hline
\end{tabular}
\end{center}
\end{table*}

\begin{figure*}[htbp]
\centering
\hspace{-25mm}
\begin{minipage}[t]{0.4\textwidth}
\includegraphics[scale=.7]{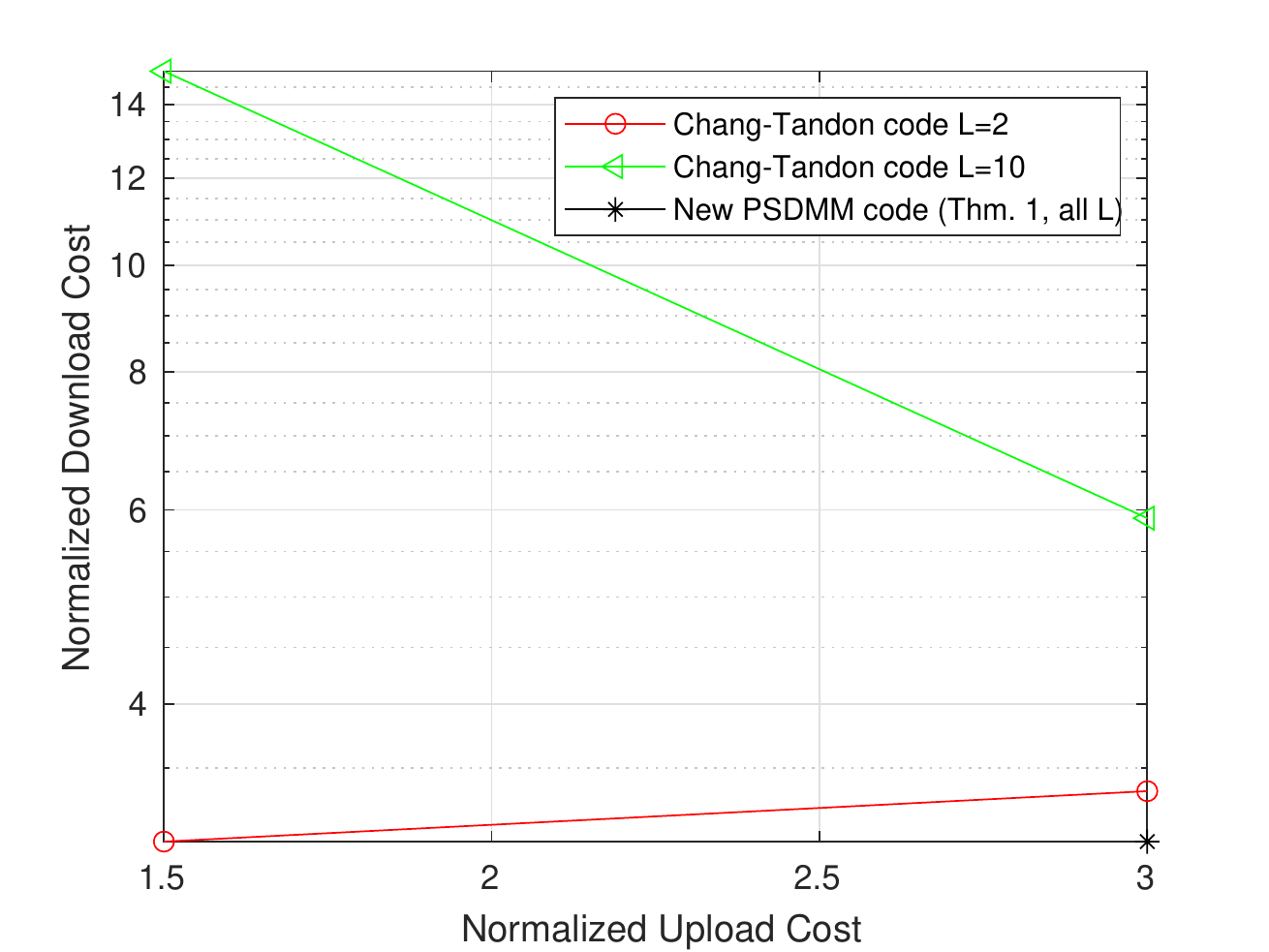}
\end{minipage}
\hspace{18mm}
\begin{minipage}[t]{0.4\textwidth}
\includegraphics[scale=.7]{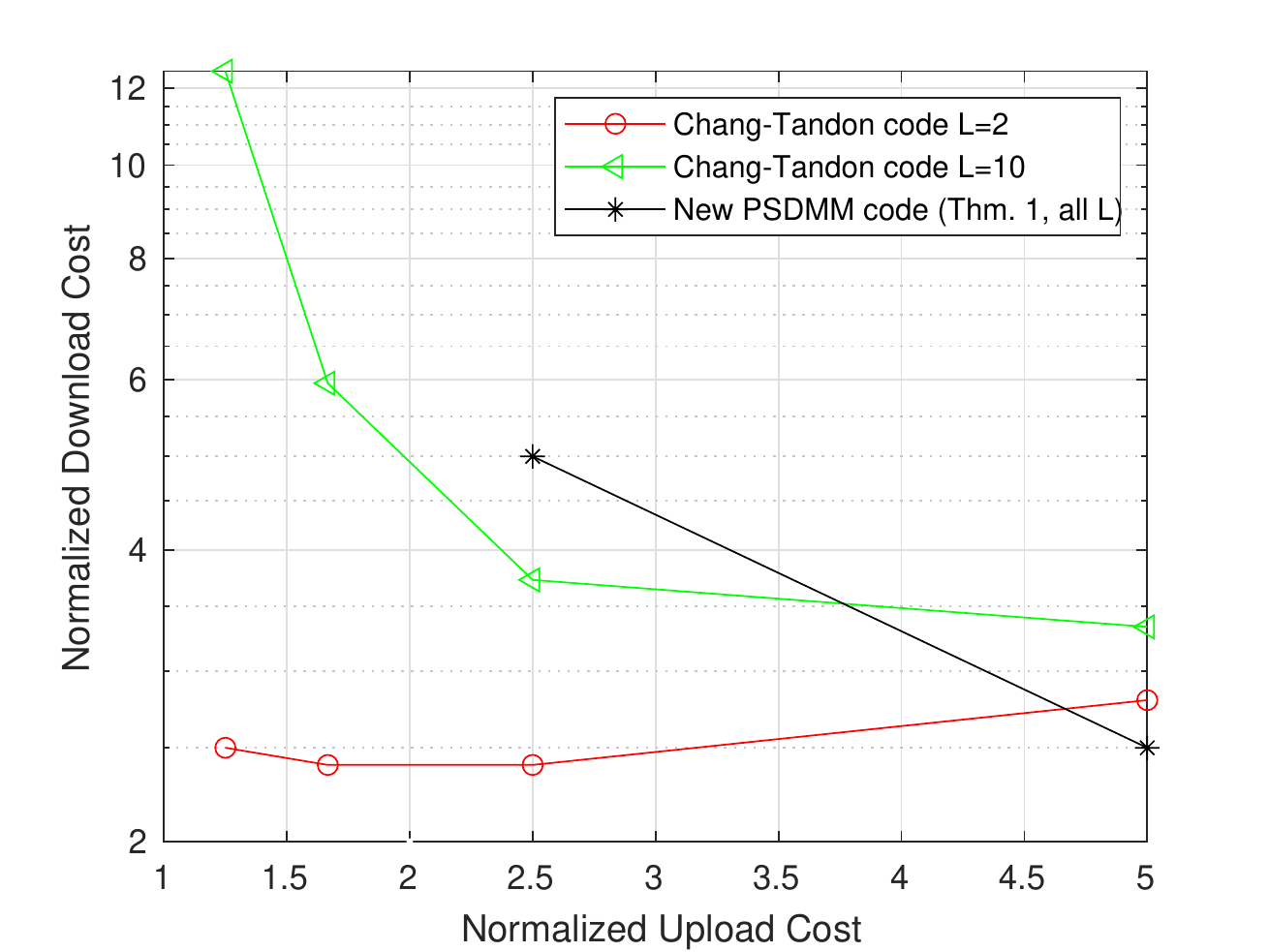}
\end{minipage}
\caption{Comparison  of the normalized upload cost and normalized download cost between the new code $\mathcal{C}_{\rm PSDMM}$ and the Chang--Tandon code under the parameters $N=R_c=3$ (resp. $N=R_c=5$),  where the normalizations of the   upload cost and download cost are over  $|A|$ and $|AB^{(i)}|$, respectively}\label{picture 3-5} 
\end{figure*}

\begin{figure*}[htbp]
\centering
\hspace{-25mm}
\begin{minipage}[t]{0.4\textwidth}
\includegraphics[scale=.7]{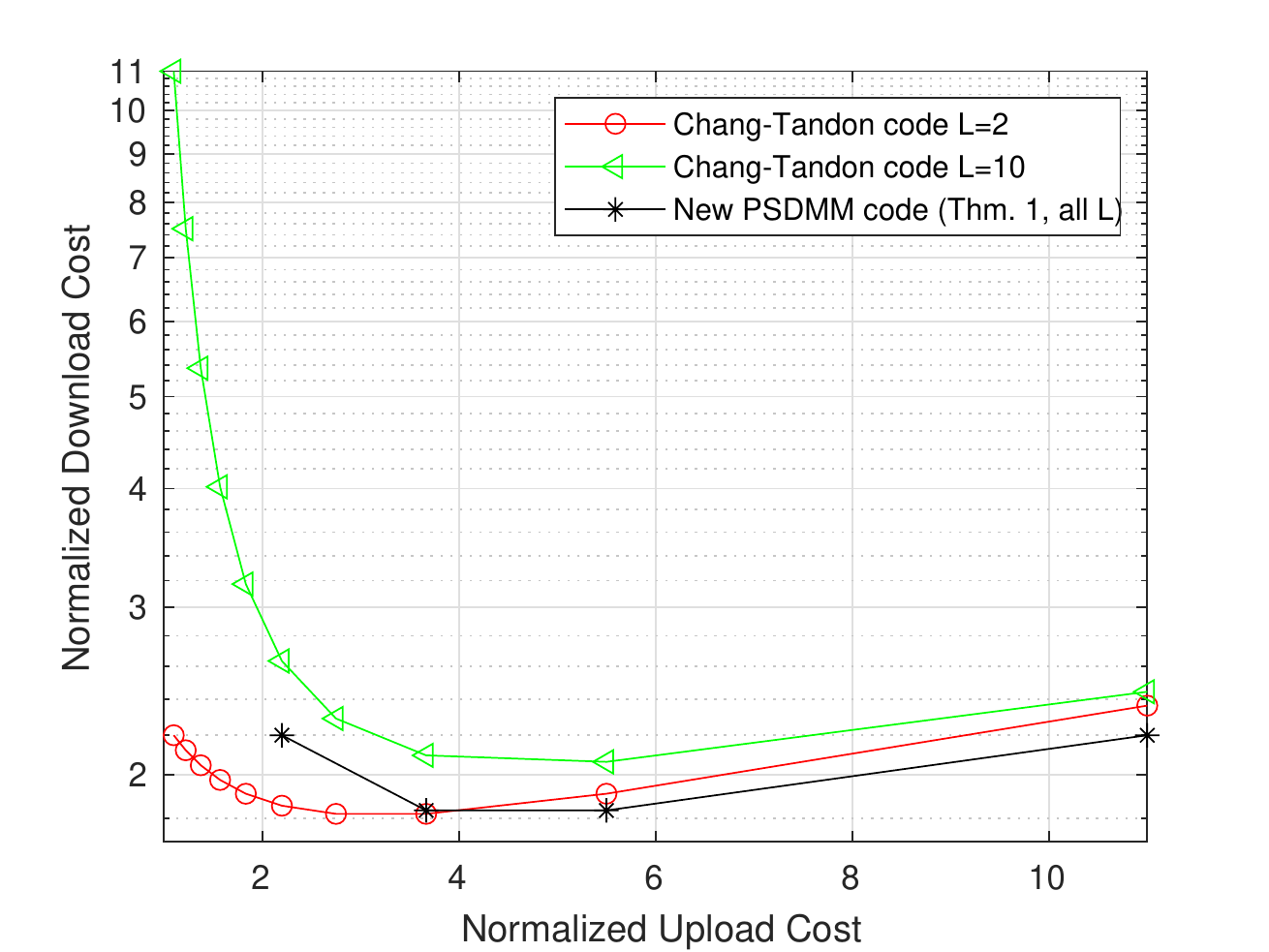}
\end{minipage}
\hspace{18mm}
\begin{minipage}[t]{0.4\textwidth}
\includegraphics[scale=.7]{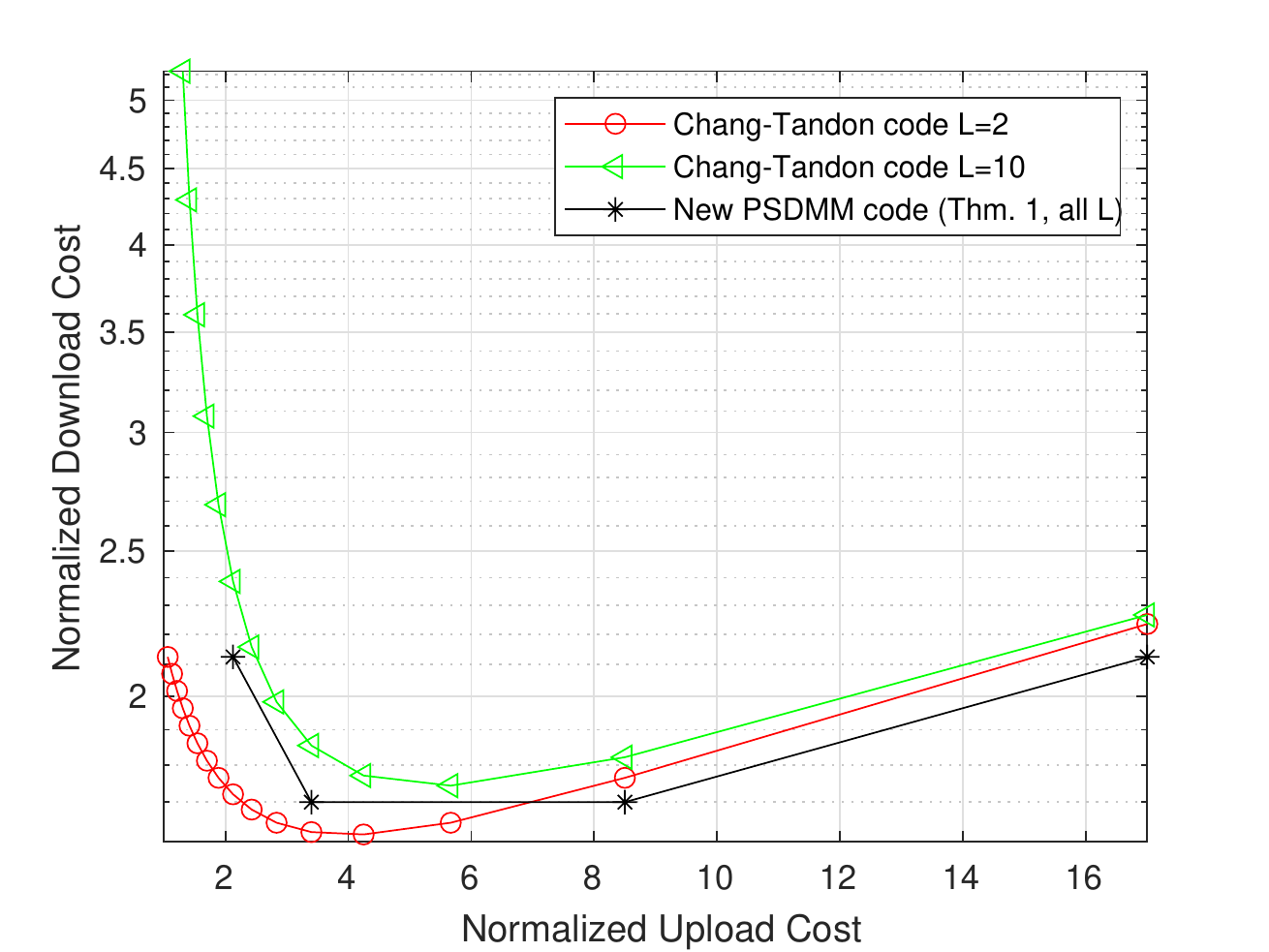}
\end{minipage}
\caption{Comparison  of the normalized upload cost and normalized download cost between the new PSDMM code $\mathcal{C}_{\rm PSDMM}$ and the Chang--Tandon code under the parameters $N=R_c=11$ (resp. $N=R_c=17$)}\label{picture 17}
\end{figure*}

\begin{Proposition}\label{Thm_assignment_PSDMM}
Conditions  (i)--(iii) of Proposition \ref{Thm_PSDMM} can be satisfied if $R_c=pmn+pm+n$,
\begin{equation*}
\alpha_{k,j}=j+kp+1, \beta_{j,k'}=pm-j+k'(pm+1)
\end{equation*}
for $k\in [0, m)$, $j\in [0, p)$, and $k'\in [0, n)$,  and $\gamma=0$.
\end{Proposition}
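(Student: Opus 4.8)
The plan is to substitute the closed forms $\alpha_{k,j}=j+kp+1$, $\beta_{j,k'}=pm-j+k'(pm+1)$, $\gamma=0$ into conditions (i)--(iii) of Proposition~\ref{Thm_PSDMM} and verify each by direct computation; there is no cleverness required beyond bookkeeping, and I would organize all of it around one observation: the exponents form a \emph{mixed-radix} numbering in which, after the cancellation forced by condition (i), the index pair $(k,k')$ occupies two disjoint ``digit slots'' --- $k$ in the block of size $p$ below $pm+1$, and $k'$ in the block of multiples of $pm+1$ --- with no carrying because $(m-1)p<pm+1$.

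First I would check (i). Substituting gives $\alpha_{k,j}+\beta_{j,k'}=(j+kp+1)+(pm-j+k'(pm+1))=pm+1+kp+k'(pm+1)$, which is visibly independent of $j$; hence for each $(k,k')$ the $p$ products $A_{k,j}B_{j,k'}^{(\theta)}$ collapse onto a single monomial of $h(x)$, and (i) holds with common exponent $u_{k,k'}:=pm+1+kp+k'(pm+1)$, so $U=\{u_{k,k'}:0\le k<m,\ 0\le k'<n\}$.

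The main obstacle is (ii), which I would handle in two parts. For $|U|=mn$: the values $kp$ for $0\le k<m$ all lie in $[0,pm+1)$ since $(m-1)p<pm+1$, so reducing $u_{k,k'}-(pm+1)$ modulo $pm+1$ recovers $kp$, hence $k$, hence $k'$ --- the $u_{k,k'}$ are pairwise distinct. For $U\cap I=\emptyset$ I would split $I$ into its four constituent families and treat each separately. (a) $\{\alpha_{k,j}+\beta_{j',k'}:j\neq j'\}$ equals $u_{k,k'}+(j-j')$ with $0<|j-j'|<p$; equating this to some $u_{\bar k,\bar k'}$ and reducing modulo $pm+1$ forces $(j-j')+kp\equiv\bar kp$, and since both sides lie in the open interval $(-(pm+1),pm+1)$ they must be equal, giving $j-j'\equiv0\pmod p$, which contradicts $0<|j-j'|<p$. (b) $\{\gamma+\beta_{j',k'}\}=\{pm-j'+k'(pm+1)\}$: equating to $u_{\bar k,\bar k'}$ and reducing modulo $pm+1$ leaves a nonzero residue of bounded absolute value on one side and $0$ on the other, so there is no solution. (c) $\{\alpha_{k,j}\}=\{j+kp+1\}\subseteq[1,mp]$ and (d) $\{\gamma\}=\{0\}$ are both contained in $[0,mp]$, which is disjoint from $U$ simply because $\min U=pm+1>mp$. (In passing, (c) and (d) also show $\alpha_{0,0},\dots,\alpha_{m-1,p-1},\gamma$ are pairwise distinct, the fact Proposition~\ref{Thm_PSDMM} uses for privacy and security.)

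Finally, for (iii) I would compute $\deg h(x)=\max(U\cup I)$ family by family: $\max U=u_{m-1,n-1}=pmn+pm+n-p$; the maximum over family (a) is that value plus $p-1$, i.e.\ $pmn+pm+n-1$; the maximum over family (b) is $\beta_{0,n-1}=pmn+n-1$; families (c) and (d) contribute at most $mp$. Since $pmn+pm+n-1$ dominates all of these, $\deg h(x)=pmn+pm+n-1$ and $R_c=\deg h(x)+1=pmn+pm+n$, so conditions (i)--(iii) of Proposition~\ref{Thm_PSDMM} are met with the stated $R_c$. The only genuinely delicate point throughout is choosing the modulus $pm+1$ in part (ii) so that the ``no-carry'' inequalities $(m-1)p<pm+1$ and $p-1<pm+1$ make the congruence reductions conclusive; everything else is routine arithmetic.
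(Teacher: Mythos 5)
Your proposal is correct and follows essentially the same route as the paper's proof: substitute the exponents, observe that $\alpha_{k,j}+\beta_{j,k'}=(k'+1)(pm+1)+kp$ is independent of $j$, verify $|U|=mn$ and $U\cap I=\emptyset$ by the same three-way case split (the $j\ne j'$ family, the $\gamma+\beta_{j',k'}$ family, and the low-degree terms below $\min U=pm+1$), and read off $\deg h(x)=\alpha_{m-1,p-1}+\beta_{0,n-1}=pmn+pm+n-1$. Two clauses are stated loosely but harmlessly: in family (a) the conclusive fact is that the \emph{difference} $(k-\bar k)p+(j-j')$ has absolute value at most $pm-1<pm+1$ (two congruent numbers merely lying in $(-(pm+1),pm+1)$ need not be equal), and in family (b) the residues modulo $pm+1$ are $pm-j'\in[pm-p+1,pm]$ versus $\bar kp\in\{0,p,\ldots,(m-1)p\}$ (not literally $0$), which are disjoint ranges; both are one-line repairs that leave the argument intact.
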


Before the proof of the above proposition, below we provide some intuitions and insights about the assignment method: 
\begin{itemize}
    \item We wish the degrees of the terms of $f(x)$ and $g(x)$ (or $g_t(x)$) are as small as possible to possibly lower the degree of $h(x)$. W.O.L.G., for $f(x)$, we first set the degree of the term related to the random matrix $Z$ to be $0$, \textit{i.e.,} $\gamma=0$, and then set the degrees of the rest $pm$ terms of $f(x)$ from $1$ to $pm$ for simplicity, \textit{i.e.,}  $\alpha_{k,j}=j+kp+1$.
\item Given $\alpha_{k,j}=j+kp+1$, Proposition  \ref{Thm_PSDMM}-(i) implies that $\beta_{i,k'}=\beta_{0,k'}-i$ for $i\in [1, p)$, \textit{i.e.,} $\beta_{0,k'}$ is the largest integer among $\beta_{0,k'}, \beta_{1,k'},\ldots,\beta_{p-1,k'}$. Thus it is suffice to fix the values of $\beta_{0,0},\beta_{0,1},\ldots,\beta_{0,n-1}$. Note that Proposition  \ref{Thm_PSDMM}-(ii) requires $U\cap I=\emptyset$, which implies that $\alpha_{0,0}+\beta_{0,0}\not \in \{\alpha_{k,j}|0\le k<m, 0\le j<p\}$, it can be fulfilled if $$\alpha_{0,0}+\beta_{0,0}>\max \{\alpha_{k,j}|0\le k<m, 0\le j<p\}=pm,$$ \textit{i.e.,} $\beta_{0,0}\ge pm$. W.O.L.G., we set $\beta_{0,0}=pm$. 
\item $\beta_{0,k'}$, $k'\in [1, n)$ can be determined similarly according to  Proposition  \ref{Thm_PSDMM}-(i) and (ii).
\end{itemize}

\begin{proof}
From Proposition \ref{Thm_assignment_PSDMM}, we have 
\begin{equation}\label{Eqn_alpha+beta}
\alpha_{k,j}+\beta_{j,k'}\equiv (k'+1)(pm+1)+kp
\end{equation}
for $j\in [0, p)$, $k\in [0, m)$ and $k'\in [0,n)$, thus Proposition \ref{Thm_PSDMM}-(i) holds, and then we easily have
\begin{equation*}
|U|\hspace{-.2mm}=\hspace{-.2mm}|\{(k'+1)(pm+1)+kp|0\le k<m, 0\le k'<n\}|\hspace{-.2mm}=\hspace{-.2mm}mn.    
\end{equation*}
Now let us prove $U\cap I=\emptyset$. 
\begin{itemize}
  \item [1)]
When $j,j'\in [0,p)$ with $j\ne j'$, \textit{i.e.,} $0<|j-j'|\le p-1$, it is easy to see that the element $$\alpha_{k,j}+\beta_{j',k'}=(k'+1)(pm+1)+kp+(j-j')$$ in $I$ is not in the set $U$ for all $k\in [0, m)$, $k'\in [0,n)$, and $j,j'\in [0,p)$ with $j\ne j'$;
  \item [2)] By Proposition \ref{Thm_assignment_PSDMM}, it is also easy to see $$\gamma+\beta_{j',k'}=\beta_{j',k'}=(k'+1)(pm+1)-j'-1\not \in U;$$
  \item [3)] Since $\min U=pm+1$, $\alpha_{k,j}\le pm$, and $\gamma=0$, thus $\gamma\not \in U$ and $\alpha_{k,j}\not \in U$ for all $k\in [0, m)$ and $j\in [0, p)$.
\end{itemize}
Combining 1)-3), we thus proved $U\cap I=\emptyset$, \textit{i.e.,} Proposition \ref{Thm_PSDMM}-(ii) holds. It is easy to check that $$\max(U\cup I)=\alpha_{m-1,p-1}+\beta_{0,n-1}=pmn+pm+n-1,$$ thus Proposition \ref{Thm_PSDMM}-(iii) holds.
\end{proof}

By Propositions \ref{Thm_PSDMM} and \ref{Thm_assignment_PSDMM}, we can derive the result in Theorem \ref{Thm_C5}.

\begin{Remark}
There may  exist other assignment methods of the exponents of $f(x)$ and $g(x)$ that are better than those in Proposition \ref{Thm_assignment_PSDMM}.  Finding the optimal assignment of the exponents of $f(x)$ and $g(x)$  is an interesting problem.
\end{Remark}

\subsection{Comparison}\label{sec:PSDMM_comp}
Next, we provide  comparisons of some key parameters of the proposed PSDMM code $\mathcal{C}_{\rm PSDMM}$ and some previous codes. Table \ref{comp_PSDMM} illustrates the first comparison. To this end, note that Chang--Tandon code \cite{chang2019upload} cannot mitigate stragglers and the number of servers can be an arbitrary integer larger than $m$ but should be prefixed.  Table \ref{comp_PSDMM_nor} gives a (fairer) comparison of the normalized parameters of the proposed PSDMM code and Chang--Tandon code, where the \textit{normalized upload cost} is the ratio of upload cost to  $|A|$ and the \textit{normalized download cost} is the ratio of the download cost to   $|AB^{(i)}|$.  Figures  \ref{picture 3-5} and \ref{picture 17}    visualizes the comparison in Table \ref{comp_PSDMM_nor} by assuming $N=R_c$ in the new PSDMM code so that it is comparable with the Chang--Tandon code \cite{chang2019upload}.

\begin{Remark}\label{Remark_BC}
Although it is well known that $R(p,m,n)<pmn$, the value of $R(p,m,n)$ is not yet known in general, even for some small parameters. Finding out the value of $R(p,m,n)$ is challenge in general \cite{yu2020straggler,haastad1990tensor}. In the case that  the value of $R(p,m,n)$ is not yet known, the constructions in \cite{yu2020entangled} are then built on  known upper bound constructions, which are scattered in the literature. Extensive background in this direction is available in \cite{blaser2013fast,landsberg2017geometry,pan2018fast,burgisser2013algebraic}, and some known upper bounds for $R(p,m,n)$ can be found in \cite{smirnov2013bilinear,laderman1976noncommutative,sedoglavic2017non5,sedoglavic2017non7}.
\end{Remark}

As the value of $R(p,m,n)$ in Table \ref{comp_PSDMM} is unknown in general by Remark \ref{Remark_BC}, we show  in Table \ref{Table_com_bi} that the code $\mathcal{C}_{\rm SDMM}$ outperforms the   SDMM  code in \cite{yu2020entangled} in terms of the recovery threshold for some small parameters.

\begin{table}[htbp]
\begin{center}
\caption{A comparison of the recovery thresholds between  the PSDMM code based on bilinear complexity in \cite{yu2020entangled} and  the new code $\mathcal{C}_{\rm PSDMM}$ for some small $p,m,n$, where $R^*(p, m, n)$ denotes the best upper bound of $R(p, m, n)$ and the values in the last column are rounded to $4$ decimal places }\label{Table_com_bi}
\setlength{\tabcolsep}{2.6pt}
\begin{tabular}{|c|c|c|c|c|c|c|}
\hline $p$ & $m$ & $n$ & $R^*(p, m, n)$ & $R_c^{\mbox{\cite{yu2020entangled} }}$ & $R_c^{\mathcal{C}_{\rm PSDMM}}$ & $\frac{R_c^{\mbox{\cite{yu2020entangled} }}-R_c^{\mathcal{C}_{\rm PSDMM}}}{R_c^{\mbox{\cite{yu2020entangled} }}}$\\
\hline $2$ & $2$ & $2$ & $7$ \cite{strassen1969gaussian}& $15$ & $14$ & $6.67\%$\\
\hline $2$ & $3$ & $3$ & $15$ \cite{burgisser2013algebraic} & $31$ & $27$& $12.9\%$\\
\hline $3$ & $3$ & $3$ & $23$ \cite{laderman1976noncommutative}& $47$ & $39$& $17.02\%$\\
\hline $4$ & $4$ & $4$ & $49$ \cite{strassen1969gaussian}& $99$ & $84$& $15.15\%$\\
\hline $5$ & $5$ & $5$ & $99$ \cite{sedoglavic2017non5}& $199$ & $155$& $22.11\%$\\
\hline $6$ & $6$ & $6$ & $160$ \cite{smirnov2013bilinear}& $321$ & $258$& $19.63\%$\\
\hline $7$ & $7$ & $7$ & $250$ \cite{sedoglavic2017non7}& $501$ & $399$& $20.36\%$\\
\hline $8$ & $8$ & $8$ & $343$ \cite{strassen1969gaussian}& $687$ & $584$& $14.99\%$\\
\hline $9$ & $9$ & $9$ & $520$ \cite{sedoglavic2017non7}& $1041$ & $819$& $21.33\%$\\
  \hline
\end{tabular}
\end{center}
\end{table}

From  Tables \ref{comp_PSDMM} and \ref{comp_PSDMM_nor}, the numerical comparison in Table \ref{Table_com_bi}, and Figures  \ref{picture 3-5} and \ref{picture 17},   we can see that the new PSDMM code $\mathcal{C}_{\rm PSDMM}$ has some advantages over the previous constructions:
\begin{itemize}
\item The new PSDMM code $\mathcal{C}_{\rm PSDMM}$  is  more  general than  Kim--Lee code in  \cite{kim2019privateCL}. Namely, it  can provide a more flexible tradeoff between the upload cost  and    the download cost. In particular, when $p=1$, the  recovery threshold of the new code is one smaller than that of Kim--Lee code in  \cite{kim2019privateCL} under the same upload cost.
\item The recovery threshold and the download cost of  the new PSDMM code $\mathcal{C}_{\rm PSDMM}$ are smaller than those of  Aliasgari \textit{et al.} PSGPD code\footnote{In \cite{aliasgari2020private}, there was a small mistake in the recovery threshold of the PSGPD code. A quick correction could be adding $st-s+2$ to the exponents of each term in the polynomial $F_{B^{(r)}}(z)$ in Eq. (43). In Table \ref{comp_PSDMM}, we give the correct value of the recovery threshold.} in  \cite{aliasgari2020private} under the same upload cost and the PSDMM  code in \cite{yu2020entangled}    for some small parameters, more specially, when 
$R(p,m,n)\ge (pmn+pm+n-1)/2$, see Table \ref{Table_com_bi} for some small parameter regions. 
    
\item The download cost of Chang--Tandon code in \cite{chang2019upload} depends on $L$ and is a  monotonically increasing function w.r.t. $L$ (see the trends in Figures  \ref{picture 3-5} and \ref{picture 17}). Besides, Chang--Tandon PSDMM  code cannot mitigate stragglers and requires a large sub-packetization degree. In contrast, the new  PSDMM code $\mathcal{C}_{\rm PSDMM}$ does not have such defects. Furthermore, Figures  \ref{picture 3-5} and \ref{picture 17}    show that the new PSDMM code  has a smaller normalized download cost for some given normalized upload cost compared with  Chang--Tandon's PSDMM code. 
\end{itemize}

\begin{Remark} Although there are non-Pareto-optimal points in Figures \ref{picture 3-5} and \ref{picture 17} for the new PSDMM code $\mathcal{C}_{\rm PSDMM}$, they are still of interest as the chosen computational complexity (submatrix dimension for the private matrix $A$) dictates the normalized upload cost. A larger submatrix dimension for the private matrix $A$ results in a smaller normalized upload cost, so we may not be able to choose a small normalized upload cost in  resource-constrained situations.
\end{Remark}

\section{Private and Secure  Distributed Matrix Multiplication From MDS-coded Servers}\label{sec:PSDMM2}
In this section, we propose a new PSDMM code $\mathcal{C'}_{\rm PSDMM}$ from MDS-coded servers, \textit{i.e.,}. the matrices $B^{(0)}, B^{(0)}, \cdots, B^{(L-1)}$ are stored across the servers in an MDS-coded form. Note that this new code does not subsume $\mathcal{C}_{\rm PSDMM}$ even if the underlying MDS code is degraded to a repetition code as they are totally two different schemes. Their difference will be elaborated later.

\subsection{A motivating example}\label{sec:Ex_PSDMM_MDS}
Assume that the user possesses a matrix $A\in \mathbf{F}^{t\times s}$ and there are $L=2$ matrices $B^{(0)}, B^{(1)}\in \mathbf{F}^{s\times r}$ stored across the $N$ servers in an MDS-coded form by an $[N, 4]$ MDS code, where $t,s,r$ are even. 
Partition the matrices $A$ and $B^{(i)}$ into block matrices the same as that in Section \ref{sec:psdmm-ex}.

Let $a_{0}, \ldots, a_{N-1}$ be $N$ pairwise distinct elements from $\mathbf{F}$, and suppose server $i$ stores two coded pieces of $B^{(0)}, B^{(1)}$ as
$g_0(a_i),~g_1(a_i)$
for $i\in [0, N)$, where
\begin{equation}\label{Eqn_g_ex_C2}
g_i(x)=B_{0,0}^{(i)}x^{\beta_{0,0}} + B_{0,1}^{(i)}x^{\beta_{0,1}} +
       B_{1,0}^{(i)} x^{\beta_{1,0}}+ B_{1,1}^{(i)}x^{\beta_{1,1}} 
\end{equation}
for $i=0, 1$,
with $\beta_{0,0}, \beta_{0,1}, \beta_{1,0}, \beta_{1,1}$ being some integers such that the matrices $B^{(0)}$ and $B^{(1)}$ can be decoded from any $4$ out of the $N$ servers, that is 
\begin{equation}\label{Eqn_PSDMM_MDS_condition}
 \det\begin{bmatrix}
          a_{i_0}^{\beta_{0,0}} & a_{i_0}^{\beta_{0,1}} &   a_{i_0}^{\beta_{1,0}} &
         a_{i_0}^{\beta_{1,1}}\\  a_{i_1}^{\beta_{0,0}} & a_{i_1}^{\beta_{0,1}} &   a_{i_1}^{\beta_{1,0}} &
         a_{i_1}^{\beta_{1,1}}\\
          a_{i_2}^{\beta_{0,0}} & a_{i_2}^{\beta_{0,1}} &   a_{i_2}^{\beta_{1,0}} &
         a_{i_2}^{\beta_{1,1}}\\
          a_{i_3}^{\beta_{0,0}} & a_{i_3}^{\beta_{0,1}} &   a_{i_3}^{\beta_{1,0}} &
         a_{i_3}^{\beta_{1,1}}
        \end{bmatrix} \ne 0
\end{equation}
for any $\{i_0, i_1, i_2, i_3\}\subset [0, N)$ and $|\{i_0, i_1, i_2, i_3\}|=4$.

Suppose that the user wishes to obtain $AB^{(0)}$. 
Let $Z_0$ be a random matrix over $\mathbf{F}^{\frac{t}{2}\times \frac{s}{2}}$.  Then, define a  polynomial
\begin{equation}\label{Eqn_PS_ex_f_MDS}
 f(x)=A_{0,0}x^{\alpha_{0,0}}+A_{0,1}x^{\alpha_{0,1}}+A_{1,0}x^{\alpha_{1,0}}+A_{1,1}x^{\alpha_{1,1}}+ Z_0x^{\gamma},
  \end{equation}
where $\alpha_{k,j}, \gamma$ are some integers to be specified later.

Let $S$ be a random matrix over $\mathbf{F}^{\frac{t}{2}\times \frac{s}{2}}$.
For every $i\in [0, N)$, the  user first evaluates $f(x)$ at $a_{i}$, $i\in [0, N)$, and then sends the query
\begin{equation}\label{Eqn_query_ex_MDS}
q_i^{(0)}=(f(a_i),S)
\end{equation}
to server $i$. Upon receiving the query $q_i^{(0)}$, server $i$ computes $f(a_{i})g_0(a_{i})+Sg_1(a_{i})$ and then returns the result to the user.

Let $h(x)=f(x)g_0(x)+Sg_1(x)$, the expression is shown in \eqref{Eqn-h-private-ex-MDS} in the next page,
\begin{figure*}
\hrulefill\small
\begin{align}
\nonumber  h(x)
=&\underbrace{(A_{0,0}B_{0,0}^{(0)}x^{\alpha_{0,0}+\beta_{0,0}}+A_{0,1}B_{1,0}^{(0)}x^{\alpha_{0,1}+\beta_{1,0}}) +(A_{0,0}B_{0,1}^{(0)}x^{\alpha_{0,0}+\beta_{0,1}}+A_{0,1}B_{1,1}^{(0)}x^{\alpha_{0,1}+\beta_{1,1}})}_{\mbox{useful\  terms}}\\
\nonumber&+\underbrace{(A_{1,0}B_{0,0}^{(0)}x^{\alpha_{1,0}+\beta_{0,0}}+A_{1,1}B_{1,0}^{(0)}x^{\alpha_{1,1}+\beta_{1,0}}) +(A_{1,0}B_{0,1}^{(0)}x^{\alpha_{1,0}+\beta_{0,1}}+A_{1,1}B_{1,1}^{(0)}x^{\alpha_{1,1}+\beta_{1,1}})}_{\mbox{useful\  terms}}+
 \underbrace{\sum\limits_{j'=0}^{1}\sum\limits_{k'=0}^{1}Z_0B_{j',k'}^{(0)}x^{\gamma+\beta_{j',k'}}}_{\mbox{interference\  terms}}\\
  \label{Eqn-h-private-ex-MDS}
&+ \underbrace{\sum\limits_{k=0}^{1}\sum\limits_{k'=0}^{1}(A_{k,0}B_{1,k'}^{(0)}x^{\alpha_{k,0}+\beta_{1,k'}}+A_{k,1}B_{0,k'}^{(0)}x^{\alpha_{k,1}+\beta_{0,k'}})+ S \left(B_{0,0}^{(1)}x^{\beta_{0,0}}+B_{0,1}^{(1)}x^{\beta_{0,1}}+B_{1,0}^{(1)}x^{\beta_{1,0}}+B_{1,1}^{(1)}x^{\beta_{1,1}}\right)}_{\mbox{interference\  terms}},
\end{align}
\end{figure*}
then the result returned from server $i$ is exactly $h(a_i)$, where $i\in [0, N)$.

The user wishes to obtain the data in \eqref{Eqn_entry_C_ex2} (related to the useful terms in $h(x)$) from any $R_c$ out of  the $N$ evaluations of $h(x)$, which can be fulfilled if the following conditions hold.

\begin{itemize}
  \item [(i)] For $k\in [0, 2)$ and $k'\in [0, 2)$,
\begin{eqnarray*}
  \alpha_{k,0}+\beta_{0,k'} = \alpha_{k,1}+\beta_{1,k'}.
\end{eqnarray*}
  \item [(ii)]For $U=\{\alpha_{k,0}+\beta_{0,k'}|0\le k<2, 0\le k'<2\}$ and
  \begin{eqnarray*}
   I&=&\{\alpha_{k,j}+\beta_{j',k'}|0\le k, k'<2, 0\le j\ne j'<2\}\\
   &&\cup  \{\gamma+\beta_{j',k'}|0\le j',k'<2\}\\
   &&\cup\{\beta_{k,j}|0\le k, j<2\},
  \end{eqnarray*}
  $|U|=4$ and $U\cap I= \emptyset$.
  \item [(iii)] $R_c=\deg(h(x))+1$.
\end{itemize}

The task can be finished because of the similar reason as in Section \ref{sec:psdmm-ex}.
By (ii), we can get $\gamma\ne \alpha_{k,j}$ for $k,j=0, 1$, thus one easily obtains $I(S; A)=0$ as $S$ is random and $I(f(a_i); A)=0$
for any $i\in [0, N)$ and $a_i\in \mathbf{F}$ according to  \eqref{Eqn_PS_ex_f_MDS}, then security is fulfilled. While the privacy condition is met by the definition of the query vector in \eqref{Eqn_query_ex_MDS} for the desired index, the detailed proof  is similar to \cite{kim2019privateCL}.

We provide a concrete exponent assignment  for this example in Table \ref{Table_private_ex_MDS1}. From the given assignment, we see that $\deg(h(x))=11$ and thus $R_c=12$. The upload cost is $N\frac{ts}{2}$ and the download cost is $R_c\frac{tr}{4}$, while each server only need to store $\frac{sr}{2}$ elements from $\mathbf{F}$. 

\begin{table}[htbp]
\begin{center}
\caption{An assignment for exponents of $f(x)$ in \eqref{Eqn_PS_ex_f_MDS} and $g(x)$ in \eqref{Eqn_g_ex_C2}, where $\spadesuit,\heartsuit,\clubsuit,\diamondsuit$ are used to highlight the exponents of the useful terms of $h(x)$ in \eqref{Eqn-h-private-ex-MDS}.}\label{Table_private_ex_MDS1}
\begin{tabular}{|c|c|c|c|c|c|c|c}
\hline $+$ & $\beta_{0,0}=0$ & $\beta_{0,1}=1$ & $\beta_{1,0}=2$& $\beta_{1,1}=3$ \\
\hline $\alpha_{0,0}=4$ & $4\ \ \spadesuit$ & $5\ \ \heartsuit$ & $6$& $7$ \\
\hline $\alpha_{0,1}=2$& $2$ & $3$ & $4\ \ \spadesuit$& $5\ \ \heartsuit$ \\
\hline $\alpha_{1,0}=8$& $8\ \ \clubsuit$ & $9\ \ \diamondsuit$ & $10$& $11$  \\
\hline $\alpha_{1,1}=6$& $6$ & $7$ & $8\ \ \clubsuit$& $9\ \ \diamondsuit$  \\
\hline $\gamma=0$& $0$ & $1$ & $2$& $3$  \\
  \hline
\end{tabular}
\end{center}
\end{table}

\subsection{General construction}
In the following, we propose a general construction for PSDMM from MDS-coded servers according to an $[N, pn]$ MDS code.
Partition the matrices $A$ and $B^{(i)}$ ($i\in [0, L)$) into block matrices the same as that in Section \ref{sec:PSDMM-ge-con}.

Let $a_{0}, \ldots, a_{N-1}$ be $N$ pairwise distinct elements from $\mathbf{F}$, then server $i$ stores the $L$ matrices
\begin{equation*}
g_0(a_i),~g_1(a_i), \ldots, g_{L-1}(a_i)    
\end{equation*}
for $i\in [0, N)$, where
\begin{equation*}
g_i(x)=\sum\limits_{j=0}^{p-1}\sum\limits_{k=0}^{n-1}B_{j,k}^{(i)}x^{\beta_{j,k}}
\end{equation*}
with 
\begin{equation}\label{Eqn_PSDMM_MDS_condition_ge}
 \det\begin{bmatrix}
          a_{i_0}^{\beta_{0,0}} & a_{i_0}^{\beta_{0,1}} &   \cdots &
         a_{i_0}^{\beta_{p-1,n-1}}\\   a_{i_1}^{\beta_{0,0}} & a_{i_1}^{\beta_{0,1}} &   \cdots &
         a_{i_1}^{\beta_{p-1,n-1}}\\ 
          \vdots & \vdots &   \ddots &
         \vdots\\ 
           a_{i_{pn-1}}^{\beta_{0,0}} & a_{i_{pn-1}}^{\beta_{0,1}} &   \cdots &
         a_{i_{pn-1}}^{\beta_{p-1,n-1}}
        \end{bmatrix} \ne 0
\end{equation}
for any $\{i_0, i_1, \ldots, i_{pn-1}\}\subset [0, N)$ and $|\{i_0, i_1, \ldots, i_{pn-1}\}|=pn$.

Suppose that the user wishes to compute $AB^{(\theta)}$ for some $\theta\in [0, L)$.
Let $Z_{\theta}$ be a random matrix over $\mathbf{F}^{\frac{t}{m}\times \frac{s}{p}}$,  and define a polynomial
\begin{equation*}
  f(x) = \sum\limits_{k=0}^{m-1}\sum\limits_{j=0}^{p-1}A_{k,j}x^{\alpha_{k,j}}+ Z_{\theta}x^{\gamma},
  \end{equation*}
where $\alpha_{k,j}, \gamma$ are some integers to be specified later.
Additionally, let $S_0, S_1, \ldots, S_{L-2}$ be $L-1$ random matrices over $\mathbf{F}^{\frac{t}{m}\times \frac{s}{p}}$.

For every $i\in [0, N)$, the  user first evaluates $f(x)$ at $a_{i}$ and then sends the query
\begin{equation}\label{Eqn_query_MDS}
 q_i^{(\theta)}=(S_0,\ldots,S_{\theta -1}, f(a_i), S_{\theta}, \ldots, S_{L-2})
\end{equation}
to server $i$. Let $q_i^{(\theta)}(j)$ be the $j$-th matrix in  $q_i^{(\theta)}$. 

Upon receiving the query, server $i$ computes 
\begin{equation*}
\sum\limits_{j=0}^{L-1}q_i^{(\theta)}(j)g_j(a_i)=f(a_i)g_{\theta}(a_i)+\sum\limits_{j=0,j\ne \theta}^{L-1}S_jg_j(a_i)
\end{equation*}
and then sends the result back to the user.

Let $h(x)=f(x)g_{\theta}(x)+\sum\limits_{j=0,j\ne \theta}^{L-1}S_jg_j(x)$, \textit{i.e.,}
{
\begin{align}
 \nonumber  h(x)=&\sum\limits_{k=0}^{m-1}\sum\limits_{j=0}^{p-1}\sum\limits_{j'=0}^{p-1}\sum\limits_{k'=0}^{n-1}A_{k,j}B_{j',k'}^{(\theta)}x^{\alpha_{k,j}+\beta_{j',k'}}\\
\nonumber&+ \sum\limits_{j'=0}^{p-1}\sum\limits_{k'=0}^{n-1}Z_{\theta}B_{j',k'}^{(\theta)}x^{\gamma+\beta_{j',k'}}\\
\nonumber &+ \sum\limits_{t=0, t\ne \theta}^{L-1}S_t \left(\sum\limits_{j=0}^{p-1}\sum\limits_{k=0}^{n-1}B_{j,k}^{(t)}x^{\beta_{j,k}}\hspace{-.5mm} \right),
\end{align}}then the result returned from server $i$ is exactly $h(a_i)$,
thus we can derive the following result.

\begin{table*}[htbp]
\begin{center}
\caption{A comparison of some key parameters between the proposed PSDMM code $\mathcal{C'}_{\rm PSDMM}$ from MDS-coded servers  (broadcast version of upload in \begin{color}{blue}blue\end{color}) and $\mathcal{C}_{\rm PSDMM}$ from replicated servers in Section \ref{sec:PSDMM}, where we assume that the underlying field size is no less than $N\ge R_c$ for a fair comparison.}
\label{comp_PSDMM_MDS}
\begin{tabular}{|c|c|c|c|c|c|}
\hline  & Storage per server & \begin{color}{blue}(Broadcast)\end{color} upload cost & Download cost & Recovery threshold $R_c$& References \\
\hline $\mathcal{C}_{\rm PSDMM}$ & $Lsr$ & $N\frac{ts}{mp}$ & $R_c\frac{tr}{mn}$ & $pmn+pm+n$ & Theorem \ref{Thm_C5}\\
\hline  $\mathcal{C'}_{\rm PSDMM}$ & $L\frac{sr}{pn}$ & \begin{color}{blue}{$\left((N+L-1)\frac{ts}{mp}\right)$}\end{color}\quad  $LN\frac{ts}{mp}$    & $R_c\frac{tr}{mn}$ & $pmn+pn-1$ & Theorem \ref{Thm_C6}\\
\hline
\end{tabular}
\end{center}
\end{table*}
\begin{Proposition}\label{Thm_PSDMM_MDS}
For PSDMM schemes for MDS-coded servers, the multiplication of   $A$ and $B^{(\theta)}$ can be securely computed with the upload cost  $LN\frac{ts}{mp}$, download cost  $R_c\frac{tr}{mn}$ with $R_c$ being the recovery threshold, and each server stores $L\frac{sr}{pn}$ elements from $\mathbf{F}$
 if the following conditions hold.
\begin{itemize}
  \item [(i)] For $k\in [0, m)$ and $k'\in [0, n)$,
\begin{eqnarray*}
  \alpha_{k,0}+\beta_{0,k'} = \cdots=\alpha_{k,p-1}+\beta_{p-1,k'}.
\end{eqnarray*}
  \item [(ii)]$U=\{\alpha_{k,0}+\beta_{0,k'}|0\le k<m, 0\le k'<n\}$ is disjoint with
{\small   
  \begin{align*}
I&=\{\alpha_{k,j}+\beta_{j',k'}|0\le k<m, 0\le j\ne j'<p, 0\le k'<n\}\\
   &\cup \{\gamma+\beta_{j',k'}|0\le j'<p, 0\le k'<n\}\\&\cup\{\beta_{j,k}|0\le j<p, 0\le k<n\},
  \end{align*}
}
  and $|U|=mn$.
  \item [(iii)] $R_c= \deg(h(x))+1=\max(U\cup I)+1$.
\end{itemize}
\end{Proposition}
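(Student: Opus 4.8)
The plan is to follow the proof of Proposition \ref{Thm_PSDMM} almost line for line, re-deriving security, privacy, decodability, and the three cost figures for the MDS-coded query \eqref{Eqn_query_MDS}; the only genuine differences are the shape of the query (which now carries the $L-1$ fresh random masks $S_0,\ldots,S_{L-2}$ instead of a single deterministic constant) and the correspondingly modified interference set $I$ (the terms $\{\beta_{j,k}\}$ replace the $\{\alpha_{k,j}\}\cup\{\gamma\}$ of the replicated case). For \textbf{security}, I would note that a single server's view $q_i^{(\theta)}=(S_0,\ldots,S_{\theta-1},f(a_i),S_\theta,\ldots,S_{L-2})$ consists of the matrices $S_0,\ldots,S_{L-2}$, drawn uniformly at random and independently of $A$, together with $f(a_i)=\sum_{k,j}A_{k,j}a_i^{\alpha_{k,j}}+Z_\theta a_i^{\gamma}$; since $Z_\theta$ is uniform and independent of $A$ and $a_i^{\gamma}\neq 0$ (in particular always, for the natural choice $\gamma=0$), the mask $Z_\theta a_i^{\gamma}$ renders $f(a_i)$ uniformly distributed and independent of $A$. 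Hence $I(q_i^{(\theta)};A)=0$, which is exactly the security requirement in the MDS model, where $\tilde A_i$ is carried inside $q_i^{(\theta)}$.

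For \textbf{privacy}, I would argue that, from the standpoint of a single (non-colluding) server, $q_i^{(\theta)}$ is an ordered list of $L$ matrices in which the $\theta$-th slot holds $f(a_i)$ and the other $L-1$ slots hold i.i.d. uniform matrices; because $f(a_i)$ is itself identically distributed to a uniform matrix (by the masking above) and the stored data $g_0(a_i),\ldots,g_{L-1}(a_i)$ do not depend on $\theta$, the joint law of $(q_i^{(\theta)},\tilde A_i,\mathbf{B}_i)$ is the same for every $\theta\in[0,L)$, whence $I(\theta;q_i^{(\theta)},\tilde A_i,\mathbf{B}_i)=0$. The explicit information-theoretic manipulation is identical to that in \cite{kim2019privateCL}, so I would only sketch it and refer there for the details, exactly as done in the proof of Proposition \ref{Thm_PSDMM}.

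For \textbf{decodability}, I would invoke conditions (i)--(iii) as in Proposition \ref{Thm_PSDMM}: by (i), the four-fold sum $\sum_{k,j,j',k'}A_{k,j}B_{j',k'}^{(\theta)}x^{\alpha_{k,j}+\beta_{j',k'}}$ collapses on the diagonal $j=j'$, so each $C_{k,k'}^{(\theta)}=\sum_{j}A_{k,j}B_{j,k'}^{(\theta)}$ becomes the coefficient of $x^{\alpha_{k,0}+\beta_{0,k'}}$ in $h(x)$; by (ii), with $I$ now comprising the cross terms $\alpha_{k,j}+\beta_{j',k'}$ ($j\neq j'$), the mask terms $\gamma+\beta_{j',k'}$, and the terms $\beta_{j,k}$ contributed by the $S_t g_t(x)$ summands, together with $|U|=mn$, these $mn$ coefficients occupy $mn$ distinct exponents, none of which coincides with an interference exponent, i.e. each $C_{k,k'}^{(\theta)}$ is the coefficient of a unique monomial of $h(x)$; and by (iii), $R_c=\deg(h(x))+1=\max(U\cup I)+1$, so the $R_c$ evaluations $h(a_{j_0}),\ldots,h(a_{j_{R_c-1}})$ returned by any $R_c$ servers determine $h(x)$ completely by Lagrange interpolation, hence all of its coefficients and in particular every $C_{k,k'}^{(\theta)}$, from which $AB^{(\theta)}$ is assembled. (Note that the MDS condition \eqref{Eqn_PSDMM_MDS_condition_ge} on the $\beta_{j,k}$ is not used here; it concerns only the storage structure and enters later, in the exponent-assignment step and in Theorem \ref{Thm_C6}.)

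Finally, the \textbf{costs} are read directly off the construction: $q_i^{(\theta)}$ is a list of $L$ matrices of size $\frac{t}{m}\times\frac{s}{p}$, so the upload cost is $\sum_{i=0}^{N-1}|q_i^{(\theta)}|=LN\frac{ts}{mp}$; server $i$ returns $h(a_i)$, a matrix of size $\frac{t}{m}\times\frac{r}{n}$, so downloading from the fastest $R_c$ servers costs $R_c\frac{tr}{mn}$; and server $i$ stores $g_0(a_i),\ldots,g_{L-1}(a_i)$, each of size $\frac{s}{p}\times\frac{r}{n}$, i.e. $L\frac{sr}{pn}$ field elements. The cost accounting and the decodability bookkeeping are routine (they essentially copy Proposition \ref{Thm_PSDMM}); the step I expect to need the most care is the privacy argument, namely checking that a single server cannot distinguish the position of $f(a_i)$ among the $S_j$'s — this is precisely where independence of the $S_j$'s from $A$ and the uniform masking of $f(a_i)$ are both used — and, following the paper's practice elsewhere, I would either write it out in full or defer to the analogous argument in \cite{kim2019privateCL}.
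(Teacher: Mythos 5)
Your proposal is correct and follows essentially the same route as the paper's proof: security from the $Z_\theta$-mask on $f(a_i)$ together with independence of the $S_j$'s, privacy by the $\theta$-invariance of the joint law of $(q_i^{(\theta)},\mathbf{B}_i)$ with a deferral to the Kim--Lee argument, decodability from (i)--(iii) via Lagrange interpolation, and the three costs read off directly. Two small remarks: the paper phrases the security step by first deducing from (ii) that $\alpha_{0,0},\ldots,\alpha_{m-1,p-1},\gamma$ are pairwise distinct and then invoking the mask, whereas you go straight to the observation that $Z_\theta a_i^{\gamma}$ is a uniform one-time pad (provided $a_i^{\gamma}\neq0$) -- both are fine, though your parenthetical about $\gamma=0$ quietly papers over the degenerate case $a_i=0$, $\gamma>0$, which the proposition as stated does not exclude (the paper has the same implicit assumption); and you explicitly account for the per-server storage $L\frac{sr}{pn}$ and note that condition \eqref{Eqn_PSDMM_MDS_condition_ge} plays no role at this level, both of which the paper's proof leaves tacit.
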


\begin{proof}
By (ii), one can deduce that $\alpha_{0,0},\ldots,\alpha_{m-1,p-1},\gamma$ are pairwise distinct, then
\begin{equation*}
  I(\{f(a_i)\}; A)=0
\end{equation*}
for any $i\in [0, N)$ and $a_i\in \mathbf{F}$ as $f(a_i)$ is masked by the random matrix $Z_{\theta}$. Since $S_0, \ldots, S_{L-2}$ are random matrices, we further have
\begin{equation*}
  I(f(a_i),S_0, \ldots, S_{L-2}; A)=0
\end{equation*}\textit{i.e.,} security is guaranteed. The privacy condition follows from   the definition of the query in Eq. \eqref{Eqn_query_MDS} for the desired index $\theta\in [0, L)$, which is random to each server. The detailed proof is similar to \cite{kim2019privateCL}, thus we omit it here.
By (i),   each $C_{k,k'}$ appears in $h(x)$,
 while (ii) guarantees that   each $C_{k,k'}$ is the coefficient of a unique term in $h(x)$, and finally
  (iii) guarantees the decodability.

It is obvious  that the upload cost   is $\sum\limits_{i=0}^{N-1}|q_i^{(\theta)}|=NL\frac{ts}{mp}$ and the download cost is   $R_c\frac{tr}{mn}$. This finishes the proof.
\end{proof}

\begin{Remark} In the above proposition, we can also broadcast the query vector component by component to all the servers at once, except for the coordinate $f(a_i)$, which has to be individually sent to each server. Depending on the relation of the size of $L$ versus $N$, this may imply significant saving in the upload cost. In the case of broadcasting, the upload cost would be $(N+L-1)\frac{ts}{mp}$.
\end{Remark}

In the following, we provide an assignment method for the exponents of $f(x)$ and $g(x)$.

\begin{Proposition}\label{Thm_assignment_PSDMM_MDS}
Conditions  (i)--(iii) of Proposition \ref{Thm_PSDMM_MDS} can be satisfied if $R_c=pmn+pn-1$,
\begin{equation*}
\alpha_{k,j}=(k+1)pn-jn, \beta_{j,k'}=jn+k' 
\end{equation*}
for $k\in [0, m)$, $j\in [0, p)$, $k'\in [0, n)$, and 
$\gamma=0$.
\end{Proposition}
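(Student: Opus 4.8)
The plan is to mirror the proof of Proposition~\ref{Thm_assignment_PSDMM}: substitute the proposed exponents into conditions (i)--(iii) of Proposition~\ref{Thm_PSDMM_MDS} and discharge each one by elementary integer arithmetic. The starting point is the identity
\[
\alpha_{k,j}+\beta_{j,k'}=\bigl((k+1)pn-jn\bigr)+\bigl(jn+k'\bigr)=(k+1)pn+k',
\]
whose right-hand side does not depend on $j$, which is precisely condition (i). I would also record at the outset that $\{\beta_{j,k'}:0\le j<p,\ 0\le k'<n\}=\{0,1,\ldots,pn-1\}$, so the matrix in \eqref{Eqn_PSDMM_MDS_condition_ge} is an ordinary Vandermonde matrix and the MDS condition \eqref{Eqn_PSDMM_MDS_condition_ge} holds for \emph{any} $pn$ distinct evaluation points --- this is exactly what lets the underlying $[N,pn]$ code be defined over a field as small as $\mathbf{F}$ with $|\mathbf{F}|\ge N$.

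By the displayed identity, $U=\{(k+1)pn+k':0\le k<m,\ 0\le k'<n\}$. For fixed $k$ the entries $k'$ fill the $n$ consecutive integers $(k+1)pn,\ldots,(k+1)pn+n-1$, and successive blocks are separated by $pn\ge n$, so these $mn$ values are pairwise distinct and $|U|=mn$. The substantive step is then $U\cap I=\emptyset$, which I would verify part by part. For the cross terms ($j\ne j'$) one has $\alpha_{k,j}+\beta_{j',k'}=(k+1)pn+(j'-j)n+k'$ with $0<|j'-j|\le p-1$; setting this equal to some $(\ell+1)pn+k''\in U$, reduction modulo $n$ forces $k'=k''$, and then the remaining equation forces $(j'-j)$ to be a multiple of $p$, impossible since $0<|j'-j|<p$. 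For the terms $\gamma+\beta_{j',k'}=\beta_{j',k'}$ and the terms $\beta_{j,k}$, every value lies in $\{0,1,\ldots,pn-1\}$ while $\min U=pn$, so these are disjoint from $U$ as well. The same arithmetic shows $\alpha_{0,0},\ldots,\alpha_{m-1,p-1},\gamma$ are pairwise distinct --- $\alpha_{k,j}=\alpha_{k'',j''}$ would force $(k-k'')p=j''-j$, impossible for distinct pairs since $|j-j''|<p$, and $\alpha_{k,j}=0=\gamma$ would need $(k+1)p=j<p$ --- which is the fact the security argument of Proposition~\ref{Thm_PSDMM_MDS} relies on.

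It remains to establish condition (iii) by computing $\deg h(x)=\max(U\cup I)$. Here I would note that $\alpha_{k,j}$ is maximized at $(k,j)=(m-1,0)$, where it equals $pmn$, and $\beta_{j',k'}$ at $(j',k')=(p-1,n-1)$, where it equals $pn-1$, so the largest exponent that can appear in $h(x)$ is $\alpha_{m-1,0}+\beta_{p-1,n-1}=pmn+pn-1$; every other member of $U\cup I$ (the smaller products $\alpha_{k,j}+\beta_{j',k'}$, the $\beta_{j',k'}$, and the $\gamma+\beta_{j',k'}$) is strictly smaller. Hence $\deg h(x)=pmn+pn-1$, and condition (iii) then determines $R_c=\deg h(x)+1$. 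The only genuinely non-routine point in the whole argument is the residue-and-divisibility bookkeeping in the cross-term case of $U\cap I=\emptyset$; condition (i), the count $|U|=mn$, the security byproduct, and the degree computation are all direct substitution and comparison of integers.
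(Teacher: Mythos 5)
Your proof is correct and fills in exactly the details that the paper elides (the paper's own ``proof'' of this proposition is a one-line appeal to the preceding ``intuitions and insights,'' which your write-up makes rigorous in the same order: the $j$-cancellation for condition (i), the Vandermonde observation for \eqref{Eqn_PSDMM_MDS_condition_ge}, the block structure of $U$, the residue-mod-$n$ argument for the cross-terms, and the $\min U = pn$ cutoff for the $\beta$-terms). One harmless slip: $\alpha_{k,j}=\alpha_{k'',j''}$ gives $(k-k'')p = j-j''$, not $j''-j$, but the conclusion is unchanged since only $|j-j''|<p$ is used.

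However, there is a discrepancy you computed correctly but did not flag. You establish $\deg h(x)=\alpha_{m-1,0}+\beta_{p-1,n-1}=pmn+pn-1$, and condition (iii) of Proposition~\ref{Thm_PSDMM_MDS} then gives $R_c=\deg h(x)+1=pmn+pn$, whereas Proposition~\ref{Thm_assignment_PSDMM_MDS} (and Theorem~\ref{Thm_C6}) states $R_c=pmn+pn-1$. Your own arithmetic thus shows the stated value is off by one. This is corroborated by the motivating example of Section~\ref{sec:Ex_PSDMM_MDS}: there $p=m=n=2$, Table~\ref{Table_private_ex_MDS1} gives $\deg h=11$ and the text explicitly concludes $R_c=12=pmn+pn$, not $pmn+pn-1=11$. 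So the blind proof's route is sound, but a complete verification should have raised this inconsistency between the computed $\max(U\cup I)+1$ and the proposition's stated $R_c$, rather than leaving ``$R_c=\deg h(x)+1$'' as the last word without evaluating it against the claim.
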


Some intuitions and insights about the assignment method:
\begin{itemize}
    \item If $\beta_{0,0}, \ldots, \beta_{0,n-1}, \ldots, \beta_{p-1,0}, \ldots, \beta_{p-1,n-1}$ are in succession, then the matrix in \eqref{Eqn_PSDMM_MDS_condition_ge} is a Vandermonde matrix and thus \eqref{Eqn_PSDMM_MDS_condition_ge} is satisfied. Thus we set $\beta_{j,k'}=jn+k'$ for $j\in [0, p)$, and $k'\in [0, n)$.

\item We set $\gamma=0$, because we will have  
\begin{eqnarray*}
  &&\{\gamma+\beta_{j',k'}|0\le j'<p, 0\le k'<n\}\\&=&\{\beta_{j,k}|0\le j<p, 0\le k<n\},
\end{eqnarray*}
thus the cardinality of the set $I$ can be reduced by its definition in Proposition \ref{Thm_PSDMM_MDS}-(ii).
\item Given $\beta_{j,k'}=jn+k'$, Proposition  \ref{Thm_PSDMM_MDS}-(i) implies that $\alpha_{k,j}=\alpha_{k,0}-jn$ for $j\in [1, p)$. Since $$\min \{\beta_{j,k}|0\le j<p, 0\le k<n\}=pn-1$$ and $$U\cap \{\beta_{j,k}|0\le j<p, 0\le k<n\}=\emptyset,$$ which can be fulfilled if $\min U\ge pn$. W.O.L.G., we set $\alpha_{0,0}+\beta_{0,0}=pn$, then $\alpha_{0,j}=pn-jn$ for $j\in [1, p)$. 
\item $\alpha_{k,0}$, $k\in [1, n)$ can be determined similarly according to  Proposition  \ref{Thm_PSDMM_MDS}-(i) and (ii).
\end{itemize}

\begin{proof}
With the above intuitions and insights, it is straightforward and easy to check  conditions (i)--(iii) of Proposition \ref{Thm_PSDMM_MDS}, therefore, we omit the proof here.
\end{proof}

{
\begin{Remark}
Note that if the polynomial $h(x)$ contains $N_{dt}$ distinct terms with $N_{dt}\le \deg (h(x))$, then the coefficients of the polynomial $h(x)$ can be possibly retrieved if one gets $N_{dt}$ points on the curve $y=h(x)$ and the corresponding  $N_{dt}$ equations are linear independent, which can be fulfilled
if the scheme is built over a sufficiently large finite field. That is, in order to minimize the recovery threshold, it is also feasible to minimize the number $N_{dt}$ of distinct terms of $h(x)$ instead of its degree if the scheme is over a sufficiently large finite field, then the recovery threshold is $R_c=N_{dt}$. This is indeed the case as in the SDMM problem in \cite{d2020gasp,d2021degree}. 
\end{Remark}

In this paper, we aim to build schemes over a small finite field, \textit{i.e.,} characterized the recovery threshold by $\deg (h(x))$ other than $N_{dt}$. Nevertheless, we still provide an alternative assignment method
for the case that the underlying finite field is sufficiently large. In this case, Proposition \ref{Thm_PSDMM_MDS}-(iii) can be replaced by 
\begin{itemize}
  \item [(iii')] $R_c=N_{dt}$ if the underlying finite field is sufficiently large.
\end{itemize}
In the meanwhile, $\beta_{0,0}, \ldots, \beta_{0,n-1}, \ldots, \beta_{p-1,0}, \ldots, \beta_{p-1,n-1}$ do not need to be in succession as in Proposition \ref{Thm_assignment_PSDMM_MDS}, as \eqref{Eqn_PSDMM_MDS_condition_ge} can be easily satisfied over a sufficiently large finite field. In this case, we can provide a more efficient exponent assignment in terms of the recovery threshold. Before presenting the general assignment, Table \ref{Table_private_ex_MDS2} gives an example of the assignment following the example in Section \ref{sec:Ex_PSDMM_MDS}, which leads to a smaller recovery threshold, \textit{i.e.,} $11$.
\begin{table}[htbp]
\begin{center}
\caption{An alternative assignment for exponents of $f(x)$ in \eqref{Eqn_PS_ex_f_MDS} and $g(x)$ in \eqref{Eqn_g_ex_C2}, where $\spadesuit,\heartsuit,\clubsuit,\diamondsuit$ are used to highlight exponents of the useful terms of $h(x)$ in \eqref{Eqn-h-private-ex-MDS}.}\label{Table_private_ex_MDS2}
\begin{tabular}{|c|c|c|c|c|c|c|c}
\hline $+$ & $\beta_{0,0}=1$ & $\beta_{0,1}=6$ & $\beta_{1,0}=0$& $\beta_{1,1}=5$ \\
\hline $\alpha_{0,0}=1$ & $2\ \ \spadesuit$ & $7\ \ \heartsuit$ & $1$& $6$ \\
\hline $\alpha_{0,1}=2$& $3$ & $8$ & $2\ \ \spadesuit$& $7\ \ \heartsuit$ \\
\hline $\alpha_{1,0}=3$& $4\ \ \clubsuit$ & $9\ \ \diamondsuit$ & $3$& $8$  \\
\hline $\alpha_{1,1}=4$& $5$ & $10$ & $4\ \ \clubsuit$& $9\ \ \diamondsuit$  \\
\hline $\gamma=0$& $1$ & $6$ & $0$& $5$  \\
  \hline
\end{tabular}
\end{center}
\end{table}
\begin{Proposition}\label{Thm_assignment_PSDMM_MDS-large}
Conditions  (i), (ii) of Proposition \ref{Thm_PSDMM_MDS} and (iii') in the above can be satisfied if 
$R_c=pmn+n+p-1$,
\begin{equation*}
\alpha_{k,j}=j+kp+1, \beta_{j,k'}=p-1-j+k'(pm+1) 
\end{equation*}
for $k\in [0, m)$, $j\in [0, p)$, $k'\in [0, n)$, $\gamma=0$, and the underlying finite field is sufficiently large.
\end{Proposition}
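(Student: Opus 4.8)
The plan is to verify, in order, (a) condition (i) together with $|U|=mn$, (b) condition (ii), and (c) condition (iii'); the first two are short bounded-integer arguments, and (c) is where the work lies. Throughout I would set $q:=pm+1$. For (i), substituting the assignment gives $\alpha_{k,j}+\beta_{j,k'}=(j+kp+1)+(p-1-j+k'q)=(k+1)p+k'q$, which is independent of $j$, so (i) holds and $U=\{(k+1)p+k'q:0\le k<m,\ 0\le k'<n\}$. Since $q\equiv 1\pmod p$ we have $\gcd(p,q)=1$; an equality of two members of $U$ forces $q\mid (k_1-k_2)p$, hence $q\mid (k_1-k_2)$, and as $|k_1-k_2|<m<q$ this gives $(k_1,k_1')=(k_2,k_2')$, so $|U|=mn$.

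For (ii), note that because $\gamma=0$ the set $\{\gamma+\beta_{j',k'}\}$ coincides with $\{\beta_{j,k}\}$, so $I$ is the union of the ``$\beta$-exponents'' $(p-1-j)+k'q$ with $0\le p-1-j\le p-1$ and the ``cross exponents'' $(k+1)p+(j-j')+k'q$ with $1\le|j-j'|\le p-1$. If a cross exponent equalled a member $(k_2+1)p+k_2'q$ of $U$, then $(k-k_2)p+(j-j')=(k_2'-k')q$; the left-hand side is not divisible by $p$ (as $0<|j-j'|<p$), hence nonzero, and its absolute value is $\le(m-1)p+(p-1)<q$, so it cannot be a nonzero multiple of $q$ — contradiction. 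If a $\beta$-exponent equalled $(k_2+1)p+k_2'q$, then $(p-1-j)-(k_2+1)p=(k_2'-k')q$, whose left-hand side lies in $[-mp,-1]$, again a nonzero integer of absolute value $<q$ — contradiction. Hence $U\cap I=\emptyset$ and (ii) holds.

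The crux is (iii'). Since $U\cap I=\emptyset$, the distinct monomials of $h(x)$ are indexed exactly by $E:=U\cup I$, so it suffices to prove $E=\{0,1,\ldots,pmn+n+p-2\}$, which gives $N_{dt}=|E|=pmn+n+p-1$. The inclusion $E\subseteq[0,pmn+n+p-2]$ follows by bounding the maximum of each of the three families (the useful exponents reach $pmn+n-1$, the $\beta$-exponents reach $(p-1)+(n-1)q\le pmn+n-2$, and the cross exponents reach $mp+(p-1)+(n-1)q=pmn+n+p-2$). For the reverse inclusion I would analyse $E$ level by level: for each fixed $k'$, at level $k'$ the $\beta$-exponents contribute the offsets $\{0,\ldots,p-1\}$, the useful exponents contribute $\{p,2p,\ldots,mp\}$, and, when $p\ge 2$, the cross exponents contribute the intervals $[(k+1)p-(p-1),(k+1)p+(p-1)]$ for $k\in[0,m)$, which overlap consecutively and hence, together with the useful exponents they surround, cover all offsets $\{1,\ldots,mp+p-1\}$; thus level $k'$ contains the full block $J_{k'}:=\{k'q,\ldots,k'q+mp+p-1\}$. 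Finally $\min J_{k'+1}-\max J_{k'}=q-(mp+p-1)=2-p\le 1$, so consecutive blocks are contiguous and $\bigcup_{k'=0}^{n-1}J_{k'}=\{0,\ldots,(n-1)q+mp+p-1\}=\{0,\ldots,pmn+n+p-2\}$, establishing the claim.

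It then only remains to invoke the large-field hypothesis twice: the $\beta_{j,k'}$ are pairwise distinct, so the $pn\times pn$ generalized Vandermonde determinant in \eqref{Eqn_PSDMM_MDS_condition_ge} is a not-identically-zero polynomial in the $a_i$ and can be made nonzero (giving the MDS storage property), and likewise the $N_{dt}\times N_{dt}$ generalized Vandermonde matrix on the $N_{dt}$ distinct exponents of $h(x)$ is invertible for a suitable choice of the $a_i$, so any $N_{dt}$ responses $h(a_i)$ recover all coefficients of $h(x)$, in particular the desired $C_{k,k'}^{(\theta)}$. The main obstacle is the ``no-gap'' count in the previous paragraph: conditions (i) and (ii) reduce to routine bounded divisibility arguments, but seeing that $U\cup I$ is exactly the contiguous interval $[0,pmn+n+p-2]$ requires carefully tracking how the cross-term intervals, the useful exponents and the $\beta$-exponents interlock within each level and how the $n$ levels chain together; the role of the large field is only to drop the succession requirement on the $\beta_{j,k'}$ imposed in Proposition \ref{Thm_assignment_PSDMM_MDS}, which is precisely what permits this tighter packing.
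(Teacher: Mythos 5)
Your proof is correct, and in fact it supplies details the paper itself omits: the paper states this proposition without proof (it only sketches ``intuitions'' for the analogous Proposition \ref{Thm_assignment_PSDMM_MDS} and appeals to the remark on sufficiently large fields), so there is no authorial argument to compare against, and your direct verification is exactly the natural one. Your checks of (i), (ii) via the coprimality of $p$ and $q=pm+1$ and bounded-size arguments are sound, and your level-by-level count showing $U\cup I=\{0,1,\ldots,pmn+n+p-2\}$ correctly yields $N_{dt}=pmn+n+p-1$ (it matches the $p=m=n=2$ example in Table \ref{Table_private_ex_MDS2}, where the exponents are exactly $\{0,\ldots,10\}$ and $R_c=11$); the $p=1$ case, which you treat only implicitly, also goes through since then there are no cross terms and each level is already the interval $\{0,\ldots,m\}$. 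One pleasant byproduct of your argument, worth making explicit: since the exponent set is the \emph{full} interval $[0,\deg h(x)]$, here $N_{dt}=\deg h(x)+1$, so decoding is ordinary Lagrange interpolation at any distinct points, and the ``sufficiently large field'' hypothesis is needed only to satisfy the MDS storage condition \eqref{Eqn_PSDMM_MDS_condition_ge} (and to have enough evaluation points), which you handle correctly by noting the $\beta_{j,k'}$ are pairwise distinct and applying a nonvanishing-polynomial argument to all $pn\times pn$ minors.
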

}

By Propositions \ref{Thm_PSDMM_MDS}, \ref{Thm_assignment_PSDMM_MDS}, and \ref{Thm_assignment_PSDMM_MDS-large}, we immediate have the result in Theorem \ref{Thm_C6}.

\subsection{Comparison and Difference between   $\mathcal{C}'_{\rm PSDMM}$ and $\mathcal{C}_{\rm PSDMM}$ in Section \ref{sec:PSDMM}}

In this section, we give a comparison of some key parameters between the proposed PSDMM code $\mathcal{C'}_{\rm PSDMM}$ from MDS-coded servers and $\mathcal{C}_{\rm PSDMM}$ from replicated servers in Section \ref{sec:PSDMM}, see Table \ref{comp_PSDMM_MDS}, and illustrate the difference between the two proposed PSDMM codes.
 
From Table \ref{comp_PSDMM_MDS}, we see that compared to $\mathcal{C}_{\rm PSDMM}$ from replicated servers in Section \ref{sec:PSDMM}, the proposed PSDMM code $\mathcal{C'}_{\rm PSDMM}$ from MDS-coded servers requires much less storage per servers, \textit{i.e.,} only a fraction of $\frac{1}{pn}$ as that of $\mathcal{C}_{\rm PSDMM}$, but at the cost of increasing the upload cost.

Note that although MDS codes subsume repetition codes, it does not mean that $\mathcal{C'}_{\rm PSDMM}$ subsume $\mathcal{C}_{\rm PSDMM}$, as they are totally two different schemes. The difference between the two schemes lies in the following aspects:
\begin{itemize}
 \item \textbf{Upload phase:} In $\mathcal{C}_{\rm PSDMM}$, the user sends an encoded piece of $A$ and $L$ field elements (\textit{i.e.,} 
$q_i^{(\theta)}$ in \eqref{Eqn_query}) to each server, whereas in $\mathcal{C'}_{\rm PSDMM}$ from MDS-coded servers, the user sends an encoded piece of $A$ together with other $L-1$ random matrices (\textit{i.e.,} 
$q_i^{(\theta)}$ in \eqref{Eqn_query_MDS}) to each server.
  \item  \textbf{Matrix partitioning:} In $\mathcal{C}_{\rm PSDMM}$ from replicated servers, the partition of the matrices $B^{(0)}, B^{(0)}, \cdots, B^{(L-1)}$ is carried out by each server before it encodes the library after receiving the query, whereas in $\mathcal{C'}_{\rm PSDMM}$ from MDS-coded servers, the partition of the matrices $B^{(0)}, B^{(0)}, \cdots, B^{(L-1)}$ is predetermined as these matrices are stored across the servers in an MDS-coded form in advance.

  \item \textbf{Polynomial evaluation:}  In $\mathcal{C}_{\rm PSDMM}$ from replicated servers, each server should first encode the library by evaluating $L$ encoding polynomials $g_t(x)$ ($t\in [0, L)$) in \eqref{Eqn_gt} at $L$ evaluation points, respectively, and then carry out the matrix multiplication between a $\frac{t}{m}\times \frac{s}{p}$ matrix and a  $\frac{s}{p}\times \frac{r}{n}$ matrix.  Whereas in $\mathcal{C'}_{\rm PSDMM}$ from MDS-coded servers, each server needs to perform $L$ times the matrix multiplication between a $\frac{t}{m}\times \frac{s}{p}$ matrix and a  $\frac{s}{p}\times \frac{r}{n}$ matrix, but does not need to evaluate polynomials.
  \end{itemize}

\section{Conclusion}\label{sec:conclusion}
We considered the problem of PSDMM and proposed two new coding schemes, \textit{i.e.,}  $\mathcal{C}_{\rm PSDMM}$ from replicated servers and $\mathcal{C'}_{\rm PSDMM}$ from MDS-coded servers. The proposed codes have a better performance than state-of-the-art schemes in that $\mathcal{C}_{\rm PSDMM}$ can achieve a smaller recovery threshold and download cost as well as providing a more flexible tradeoff between the upload and download costs, whereas $\mathcal{C'}_{\rm PSDMM}$ can significantly save the storage in the servers. Characterizing the optimal trade-off between the upload and download costs as well as  various  theoretical bounds for PSDMM provide interesting open problems for further study.

\section*{Acknowledgment}
The authors would like to thank the Associate Editor Prof. Rafael Schaefer and the two anonymous reviewers for their valuable suggestions and comments, which have greatly improved the presentation and quality of this paper.	

\bibliographystyle{IEEEtran}
\bibliography{SDMM}

\begin{IEEEbiography}[{\includegraphics[width=1in,height=1.25in,clip,keepaspectratio]{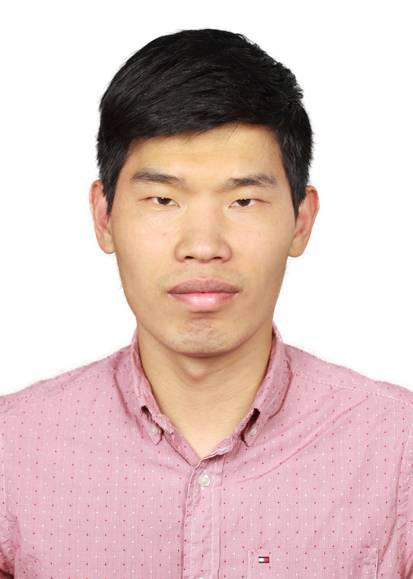}}]
{Jie Li}(Member, IEEE)    
received the B.S. and M.S. degrees in mathematics from Hubei University, Wuhan, China, in 2009 and 2012, respectively, and received the Ph.D. degree from the department of communication engineering, Southwest Jiaotong University, Chengdu, China, in 2017. From  2015 to 2016, he was a visiting Ph.D. student with the Department of Electrical Engineering and Computer Science, The University of Tennessee at Knoxville, TN, USA.  From   2017 to   2019, he was a postdoctoral researcher with the Department of Mathematics, Hubei University, Wuhan, China. From  2019 to 2021, he was a postdoctoral researcher with the Department of Mathematics and Systems Analysis, Aalto University, Finland. He is currently a senior researcher with the Theory Lab, Huawei Tech. Investment Co., Limited, Hong Kong SAR, China. His research interests include distributed computing, private information retrieval, coding for distributed storage, and sequence design.

Dr. Li received the IEEE Jack Keil Wolf ISIT Student Paper Award in 2017.
\end{IEEEbiography}

\begin{IEEEbiography}[{\includegraphics[width=1in,height=1.25in,clip,keepaspectratio]{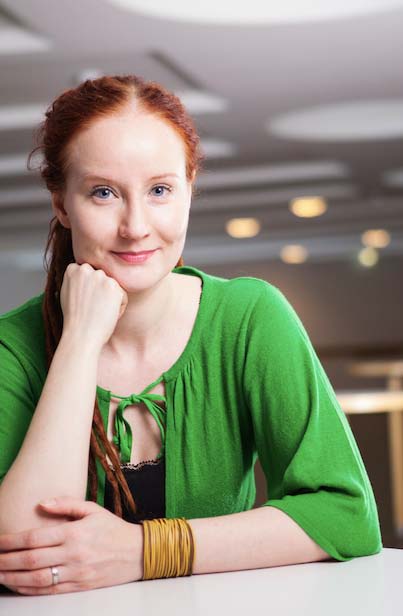}}]
{Camilla Hollanti}(Member, IEEE)     received the M.Sc. and Ph.D. degrees from the University of Turku, Finland, in 2003 and 2009, respectively, both in pure mathematics. Her research interests lie within applications of algebraic number theory to wireless communications and physical layer security, as well as in combinatorial and coding theoretic methods related to distributed storage systems and private information retrieval.

For 2004-2011 Hollanti was with the University of Turku. She joined the University of Tampere as  Lecturer for the academic year 2009-2010. Since 2011, she has been with the Department of Mathematics and Systems Analysis at Aalto University, Finland, where she currently works as Full Professor and Vice Head, and leads a research group in Algebra, Number Theory, and Applications. During 2017-2020, Hollanti was affiliated with the Institute of Advanced Studies at the Technical University of Munich, where she held a three-year Hans Fischer Fellowship, funded by the German Excellence Initiative and the EU 7th Framework Programme.

Hollanti is currently an editor of the AIMS Journal on Advances in Mathematics of Communications,  SIAM Journal on Applied Algebra and Geometry, and IEEE Transactions on Information Theory. She is a recipient of several grants, including six Academy of Finland grants. In 2014, she received the World Cultural Council Special Recognition Award for young researchers. In 2017, the Finnish Academy of Science and Letters awarded her the V\"ais\"al\"a Prize in Mathematics. For 2020-2022, Hollanti is serving as a member of the Board of Governors of the IEEE Information Theory Society, and is one of the General Chairs of IEEE ISIT 2022.
\end{IEEEbiography}
\end{document}